\newtheorem{theorem}{Theorem}[section]
\newtheorem{claim}[theorem]{Claim}
\newtheorem{lemma}[theorem]{Lemma}
\newtheorem{definition}[theorem]{Definition}
\newtheorem{remark}[theorem]{Remark}
\newtheorem{proposition}[theorem]{Proposition}
\newcommand{\polylog}{\mathrm{polylog}}
\newcommand{\supp}{\mathrm{supp}}
\newcommand{\ber}{\mathrm{Bernuolli}}
\newcommand{\E}{\mathbb{E}}
\newcommand{\adv}{\mathrm{Adv}}
\newcommand{\poi}{\mathrm{Poi}}
\newcommand{\bin}{\mathrm{Binomial}}
\title{Multi-Pass Streaming Lower Bounds for Uniformity Testing}
\author{
Qian Li\thanks{Shenzhen International  Center For Industrial  And  Applied  Mathematics, Shenzhen Research Institute of Big Data. Email: \texttt{liqian.ict@gmail.com}}
\quad
Xin Lyu\thanks{UC Berkeley. Email: \texttt{xinlyu@berkeley.edu}}
}
\date{}
\begin{document}
\maketitle

\begin{abstract}
We prove multi-pass streaming lower bounds for uniformity testing over a domain of size $2m$. The tester receives a stream of $n$ i.i.d. samples and must distinguish (i) the uniform distribution on $[2m]$ from (ii) a Paninski-style planted distribution in which, for each pair $(2i-1,2i)$, the probabilities are biased left or right by $\epsilon/2m$. We show that any $\ell$-pass streaming algorithm using space $s$ and achieving constant advantage must satisfy the tradeoff $sn\ell=\tilde{\Omega}(m/\epsilon^2)$. This extends the one-pass lower bound of Diakonikolas, Gouleakis, Kane, and Rao (2019) to multiple passes.

Our proof has two components. First, we develop a hybrid argument, inspired by Dinur (2020), that reduces streaming to two-player communication problems. This reduction relies on a new perspective on hardness: we identify the source of hardness as uncertainty in the bias directions, rather than the collision locations. Second, we prove a strong lower bound for a basic two-player communication task, in which Alice and Bob must decide whether two random sign vectors $Y^a,Y^b\in\{\pm 1\}^m$ are independent or identical, yet they cannot observe the signs directly—only noisy local views of each coordinate. Our techniques may be of independent use for other streaming problems with stochastic inputs. 

\end{abstract}

\section{Introduction}
Uniformity testing is one of the most fundamental distribution testing tasks: given $n$ independent samples from an unknown distribution $P$ over a domain of size $2m$, the goal is to distinguish whether $P$ is uniform or $\epsilon$-far from uniform under the total variation distance. Its simplicity makes it a canonical benchmark for understanding the interplay between statistical and computational resources, and insights obtained here routinely inform more complex testing problems; see \cite{survey} for a recent survey on uniformity testing and related problems. 

In this paper, we will focus on the streaming model, in which samples $X_1, \ldots, X_n$ are given as a stream and the algorithm must operate with limited memory $s$. The streaming setting captures modern large-scale learning scenarios: the learning algorithm scans the massive dataset sequentially, processing samples one at a time and updating the parameters continuously.

Uniformity testing in the streaming model has been studied in several works \citep{colt19,meir2020comparison,berg2022memory,canonne2024simpler}.
On the upper bound side, there is a folklore one-pass tester (see e.g., \cite{berg2022memory}) which uses $s=\tilde{O}(1)$ bits of memory and $n=O(m/\epsilon^2)$ samples. Here, $\tilde{O}$ and $\tilde{\Omega}$ omit $\polylog(m+n+1/\epsilon)$ terms. Beyond this small space regime, \cite{colt19} developed a streaming algorithm achieving the sample-space tradeoff $sn=\tilde{O}(m/\epsilon^4)$ when $s\leq \min\{m^{0.9},n\log m\}$. The applicable space regime was later extended to $s\leq \min\{m\log m,n\log m\}$ by \cite{canonne2024simpler}. On the lower bound side, \cite{colt19} established an unconditional lower bound $s n=\Omega(m/\epsilon^2)$, matching the folklore tester when $s=\tilde{O}(1)$; they also showed that the upper bound $sn=\tilde{O}(m/\epsilon^4)$ is tight for large memory size, specifically for $s=\tilde{\Omega}\left(m^{0.24}/\epsilon^{8/3}+m^{0.1}/\epsilon^4\right)$. 

All of the above results, however, apply only to one-pass streaming algorithms. In practice, learning algorithms typically make multiple scans over the training samples, especially when the available data is limited. This naturally motivates the study of the multi-pass setting. \cite{colt19} identified the analysis of uniform testing under multiple passes as an open problem; however, extending their information-theoretic lower bound technique to the multi-pass setting seems challenging.

In this paper, we extend the unconditional lower bound $s n=\tilde{\Omega}(m/\epsilon^2)$ to the multi-pass streaming model (Theorem \ref{thm:main0}), showing that the folklore tester remains tight even when a polylogarithmic number of passes is allowed.
 Following prior works, we employ the Paninski problems as the hard instance: given a stream $X_1,\cdots,X_n$ of i.i.d. samples from an unknown distribution $P$ over $[2m]$,
the goal is to distinguish between
\begin{itemize}
\item \textbf{Uniform distribution}: $P_i=\frac{1}{2m}$ for $i\in[2m]$. 
\item \textbf{Planted bias distribution}: Draw a \emph{bias direction} $Y=(Y_1,\cdots,Y_m)\in\{\pm 1\}^m$ uniformly at random. For each pair $(2i-1,2i)$ with $i\in[m]$, set $\left(P_{2i-1},P_{2i}\right)=\left(\frac{1+Y_i\epsilon}{2m},\frac{1-Y_i\epsilon}{2m}\right)$. 
\end{itemize}
All of the mentioned bounds above for uniformity testing also apply to the Paninski problem.
\begin{theorem}[Main Theorem]\label{thm:main0}
If a $\ell$-pass streaming algorithm solves the Paninski problem (and thus also uniformity testing), using $n$ samples and $s$ bits of memory, then $\ell s n=\tilde{\Omega}(m/\epsilon^2)$.
\end{theorem}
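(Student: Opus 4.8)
\medskip
\noindent\textit{Proof plan.} The strategy has two stages: a hybrid argument, inspired by Dinur (2020), that reduces the $\ell$-pass streaming problem to a two-player communication task, and a strong lower bound for that task. The conceptual pivot is to relocate the hardness from the \emph{collision pattern} to the \emph{bias directions}.

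\emph{Stage 1 (reframing the instance).} Condition on the multiset of pair indices $\lceil X_1/2\rceil,\dots,\lceil X_n/2\rceil$ together with the stream order. This data has the same law under the uniform and the planted hypothesis — $n$ i.i.d.\ uniform draws from $[m]$, in a uniformly random order — so it is effectively public randomness. Given it, the planted instance decouples into $m$ independent one-bit ``channels'': if pair $p$ receives $c_p$ samples, the algorithm sees $c_p$ i.i.d.\ $\mathrm{Bernoulli}(\tfrac{1+Y_p\epsilon}{2})$ bits, versus fair bits under the uniform law. In the regime of interest, $c_p\approx n/m\ll 1/\epsilon^2$, no coordinate is individually resolvable; and, crucially, re-reading the stream on later passes produces no new bits. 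So the whole task is to aggregate $m$ nearly-useless signals about the hidden direction $Y$ inside an $s$-bit memory.

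\emph{Stage 2 (reduction to communication).} I would then split the $m$ channels into two halves owned by Alice and Bob, and use the hybrid to show that the streaming algorithm is simulated by a protocol in which each party holds a noisy local view of its half of $Y$ while the memory state crosses the Alice/Bob boundary $O(\ell)$ times, i.e.\ with total communication $C=\tilde O(\ell s)$. A naive split is useless — one party holding $\gtrsim\sqrt m/\epsilon^2$ samples can already run the collision tester — so the Stage~1 reframing is what lets the reduction land on a genuinely two-sided task: an intermediate sample-based problem $\hb$, then distilled to the clean sign problem $\hs$ of deciding whether two random sign vectors $Y^a,Y^b\in\{\pm1\}^m$ are \emph{identical} or \emph{independent}, where neither player ever sees $Y^a$ or $Y^b$, only independent noisy coordinate-wise views. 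In the identical case the two views share the common $Y$ and are weakly correlated; in the independent case they are not; and each single view is marginally (close to) uniform, so no party decides alone. Planted maps to ``identical'', uniform to ``independent''.

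\emph{Stage 3 (the communication lower bound) and the main obstacle.} The bulk of the work is the lower bound for $\hs$, via a tensorization/direct-sum argument: by a cut-and-paste (rectangle) or information-theoretic argument, bound the information the transcript reveals about any single coordinate by $O(C/m)$, and combine this with a single-coordinate analysis showing that detecting the correlation of two noisy one-bit views from $b$ transmitted bits gives advantage only about $\sqrt b$ times the per-view signal strength, which for $\approx n/m$ samples of bias $\epsilon$ is $\approx\sqrt{n/m}\,\epsilon$. Requiring $\Omega(1)$ aggregate advantage over the $m$ coordinates then forces $C\cdot(n/m)\,\epsilon^2=\tilde\Omega(1)$ (the argument is not sensitive to the exact power of $\ell s$ here); substituting $C=\tilde O(\ell s)$ and undoing the Poissonization that models the random counts $c_p$ gives $\ell s n=\tilde\Omega(m/\epsilon^2)$. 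I expect the $\hs$ bound to be the crux: the problem is trivial without noise (hash and send), so the argument must be quantitatively tight in the noise level, and it must defeat adaptive, multi-round protocols, ruling out any clever interleaving that aggregates the $m$ tiny per-coordinate correlations faster than a direct sum allows. A secondary difficulty is keeping the streaming-to-communication simulation lossless across passes — the Stage~1 fact that extra passes reveal nothing new about $Y$ is what confines the loss to the $O(\ell)$ boundary crossings rather than something scaling with the stream length.
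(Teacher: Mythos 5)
Your high-level architecture is the paper's (a Dinur-style hybrid that relocates the hardness to the bias directions, then a two-player ``identical vs.\ independent signs with noisy views'' problem with a communication bound of the form $\tilde O(C\epsilon^2 n/m)$, $C=\tilde O(\ell s)$), but the reduction as you describe it does not work. You split the $m$ \emph{channels} between Alice and Bob and claim the memory crosses the boundary $O(\ell)$ times; since samples from different channels are interleaved in the stream, a domain split forces up to $n\ell$ crossings, not $O(\ell)$. The valid split is over the \emph{time} axis: Alice holds the first half of the stream, Bob the second, which an $\ell$-pass $s$-space algorithm turns into an $O(\ell s)$-bit protocol, and the two sign vectors $Y^a,Y^b$ arise because each half is generated from its own direction. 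Moreover, ``planted $\mapsto$ identical, uniform $\mapsto$ independent'' is false as a distributional statement: the uniform stream is not ``two halves with independent directions'' (a single half with one random direction is already far from uniform once $n\gtrsim\sqrt m/\epsilon^2$). The paper bridges this with hybrids $D^{k,n/k}$ (a fresh direction per block) and the recursion $f(n)\le 2f(n/2)+\mathrm{Adv}(D^{2,n/2},D^{1,n})$, so the identical-vs-independent game only controls the single-swap term; the telescoping works only because the single-swap bound is \emph{linear in the block length}. Your proposal never engages with this accounting, and it is not optional.

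The crux, the lower bound for $\hs$, is only gestured at, and your quantitative sketch is inconsistent with your conclusion. With per-coordinate information $b\approx C/m$ and per-coordinate advantage $\approx\sqrt b\cdot\sqrt{n/m}\,\epsilon$ as you state, a plain sum over $m$ coordinates gives $\sqrt{Cn}\,\epsilon$, forcing only $Cn\epsilon^2=\Omega(1)$, i.e.\ $\ell s n=\tilde\Omega(1/\epsilon^2)$ --- off by a factor of $m$; aggregating in squares instead gives a bound of the form $\sqrt{Cn/m}\,\epsilon$, which scales like $\sqrt n$ in the block length and therefore blows up under the hybrid recursion (yielding only $\ell s n^2=\tilde\Omega(m/\epsilon^2)$). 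What is needed, and what the paper proves, is an advantage bound \emph{linear} in $C$ and in $n/m$, i.e.\ each ``found'' common coordinate costs $\Theta(m/n)$ bits and buys only $\epsilon^2$ advantage; establishing this against adaptive multi-round protocols is exactly where the technical work goes: a worst-case to average-case reduction from Unique Set Disjointness to a Hidden Index Problem, a noisy version in which the advantage provably shrinks by a factor $2\epsilon^2$, an extension to general $(\epsilon,\delta)$-indistinguishable observations, and, for $n\ge m$, a differential-privacy amplification lemma collapsing $\mathrm{Poi}(n/m)$ samples per coordinate into one observation of bias $\approx\epsilon\sqrt{(n/m)\log(1/\delta)}$ together with an SDPI-type bound of Hadar et al.\ for correlated strings. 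A generic ``per-coordinate information $O(C/m)$ plus Pinsker'' direct-sum argument is not known to deliver the linear-in-$C$, $\epsilon^2$-per-coordinate tradeoff, and as shown above the version you wrote down loses exactly the factor that the theorem needs.
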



Our bound decreases linearly in the number of passes, suggesting that a stream of $\ell\cdot n$ fresh samples is at least as useful as $\ell$ passes over $n$ samples.

\begin{remark} The Paninski problem itself is a basic stochastic streaming problem of independent interest. For example,  
\cite{crouch2016stochastic} studied the task of estimating the collision probability $\sum_i p_i^2$ within a multiplicative error of $\epsilon$ in the streaming model. They showed that for any $n=\Omega_{\epsilon}(\sqrt{m})$, it is sufficient to have $st=\tilde{O}_{\epsilon}(m)$. Through a reduction to the needle problem, \cite{LZ23} established a multi-pass lower bound $st\ell=\tilde{\Omega}(m/\epsilon)$, demonstrating the tightness of the upper bound. Using the Paninski problems as the hard instance, we directly recover the same lower bound.
\end{remark}

\subsection{Proof Approach}
The proof of Theorem \ref{thm:main0} consists of two steps.
\vspace{-1ex}

\paragraph{Step I: hybrid argument} We develop a hybrid argument that reduces streaming to two-player communication problems. It hinges on a different view of the source of hardness: rather than in locating collisions, we understand the hardness stemming from resolving the unknown bias directions $Y\in\{0,1\}^m$. Indeed, if $Y$ were revealed, by mapping each sample to the heavy side, the Paninski problem collapses to distinguishing a fair coin from a $\epsilon$-biased one, and thus can be solved quite efficiently.

Motivated by this perspective, we define a family of hybrid distributions $\{D^{k,n/k}\}$ over streams of length $n$: partition the stream into $k$ contiguous blocks of size $n/k$, and draw each block from the planted bias distribution with a fresh bias direction $Y^i$. The planted case corresponds to $D^{1,n}$ (a single hidden direction used for the entire stream), and the extreme $D_{n,1}$ (a fresh direction per sample) is essentially the uniform case. For two distributions $D_0,D_1$ over random streams, let $\adv_{s,\ell}(D_0,D_1)$ denote the maximum distinguishing advantage achievable by an $\ell$-pass streaming algorithm using $s$ bits of space. Then we can obtain a recurrence:
\[
f(n):=\adv_{s,\ell}(D_{n,1},D_{1,n})\leq 2f(n/2)+\adv_{s,\ell}(D_{2,n/2},D_{1,n}),
\]
so it suffices to upper bound the advantage in the one-swap case $\adv_{s,\ell}(D_{2,n/2},D_{1,n})$. Equivalently, we partition the stream into two halves, each generated with a single bias direction, and ask whether the two halves use identical or independent directions.


\vspace{-1ex}

\paragraph{Step II: the hidden-sign problem} In this step, we analyze a clean two-player communication problem (Definition \ref{def:d-game}): Alice and Bob must decide whether two hidden sign vectors $Y^a,Y^b\in\{\pm 1\}^m$ are identical or independent; however neither player can see the signs directly, but only noisy local views per coordinate. In our setting, for each sign $Y_i\in\{-1,1\}$, the player can see $\poi(n/m)$ independent samples from $\ber(1/2+Y_i\epsilon/2)$.


We prove that any $C$-bit communication protocol has distinguishing advantage at most $\tilde{O}(C\epsilon^2\cdot (n/m))$. Depending on whether $n$ is greater than $m$ or otherwise, the argument is slightly different.

\medskip\noindent\underline{When $n\le m$.} Here the two sides observe (independent) fractions of the $m$ coordinates. Intuitively it is the case that both players must first search for ``shared coordinates''— indices $i$ such that both players observe samples about the $i$-th  coordinate. The hardness of this problem can be connected to the well-known problem of Set Disjointness, where players hold subsets of $[n]$ and they want to decide/search for intersections. A general phenomenon here is that on average they need to communicate $m/n$ bits to agree on a new common coordinate. Moreover, every new coordinate only offers $\epsilon^2$ distinguishing advantage: essentially, this stems from the fact that the TV distance between a pair of uniform and independent bits $(X,Y)$, and a pair of marginally-uniform but $\epsilon^2$-correlated bits $(X',Y')$, is exactly $\epsilon^2$. 

So, overall, the intuition of the lower bound can be described as: with $C$ bits of communication, the two players can find roughly $\frac{Cn}{m}$ ``common'' coordinates, on which both of them receive samples. Each of these coordinates increases their advantage by at most $\epsilon^2$, and we take a union bound to obtain the conclusion. 

Our proof formalizes the intuition through a series of reductions, starting from the lower bound for the Unique Set Disjointness problem with small advantage \cite{stoc13,random14,dinur2020streaming}.

\medskip\noindent\underline{When $n\ge m$.} This is the case where the two sides (with high probability) observe samples from most if not all coordinates. Moreover, players typically get multiple independent samples per coordinate. Intuitively this makes the distinguishing task easier: consider an extreme case, where players receive arbitrarily many samples per coordinate. Then, by taking the majority votes per coordinate, players can recover $Y^a$ and $Y^b$ locally and solve the problem with a constant communication complexity. This shows that the number of available samples per coordinate (namely $n/m$) will play a key role in the analysis.

We will prove a key reduction result, which intuitively says that for any $k$ moderately large (say, larger than $\log(m)$), $\poi(k)$ many samples from $\ber(1/2+Y_i\epsilon/2)$ is roughly as useful as a single sample from $\ber(1/2+Y_i\epsilon'/2)$ for some $\epsilon' \approx \epsilon \sqrt{k\log(1/\delta)}$, up to a statistical slackness of $\delta$. Our reduction draws on amplification/composition techniques from differential privacy (e.g.~\cite{DworkRV10}). 
In the end, we can reduce the case of $n\gg m$ to the case that $n\approx m$ but with larger $\epsilon' \approx \epsilon\sqrt{n/m}$. Then we use the already established bound of $\tilde{O}(C(\epsilon')^2(n/n)) = \tilde{O}(C\epsilon^2\cdot (n/m))$ to complete the proof.


\subsection{Related work}\label{sec:related works}
A large body of works studies streaming problems with stochastic inputs \citep{GM07,ToC16,Andoni08,CJP08,GM09,crouch2016stochastic,SODA2020randomordermatching,braverman2020coin,LZ23,stoc24,colt25}, with applications across statistical inference \citep{DBLP:conf/focs/Raz16,Sharan19,colt19,colt22} and cryptography \citep{dinur2016memory,tcc-2018-28986,DBLP:conf/eurocrypt/JaegerT19,dinur2020streaming}. Establishing space lower bounds for multi-pass streaming algorithms remains challenging. Many space lower bounds \citep{DBLP:conf/focs/Raz16,colt19,braverman2020coin,coin21,colt22} are restricted to one pass due to technical barriers, and the authors leave multi-pass bounds as major open problems. The toolbox for multi-pass lower bounds is comparatively limited. A general method reduces multi-pass streaming to communication complexity, e.g. \citep{LZ23,dinur2020streaming}. Recently, \cite{stoc24} introduced a multi-pass information complexity framework and obtained tight space lower bounds for the coin and needle problems.

\paragraph{Paper organization} Section~\ref{sec:notation} introduces the preliminaries and notations. Section~\ref{hybrid} presents the hybrid argument that reduces streaming to two-player communication problems, namely hidden-sign problems. Section \ref{sec:hiddensign} establishes the lower bound for the hidden-sign problems. Section~\ref{sec:conclusion} concludes this paper.

\section{Preliminaries}\label{sec:notation}

Generally, given a finite space $\Omega$, we use $x\sim \Omega$ to denote a random variable $x$ that is drawn uniformly at random form $\Omega$. Similarly, for a distribution $D$, we write $x\sim D$ to denote that $x$ is drawn according to $D$. Given two distributions $D_0$ and $D_1$ over $\Omega$, their \emph{total variation} (TV) distance $d_{TV}(D_0,D_1)$ is defined as $\frac{1}{2}\sum_{\omega\in\Omega}\big|\Pr[D_0=\omega]-\Pr[D_1=\omega]\big|$.

\subsection{Poissonization}

We need the technique of Poissonization. Let us review this standard technique in the below. 

\begin{definition}
    A Poisson distribution with parameter $\lambda$, denoted by $\poi(\lambda)$, is a discrete distribution over $\mathbb{N}$ with density $\Pr[\poi(\lambda) = k] = \frac{\lambda^k e^\lambda}{k!}$ for every $k\ge 0$.
\end{definition}

We rely on the following well-known fact. Let $P$ be a distribution over $[m]$. Consider the random variable $(x_1,\dots, x_m)$ sampled as follows: for each $i$ let $x_i\sim \poi(\lambda P_i)$. Also consider the random variable $(y_1,\dots, y_m)$ sampled as follows: draw $n\sim \poi(\lambda)$ and draw $z_1,\dots, z_m\sim_{i.i.d.} P$. Then let $y_i$ be the number of $i$'s among $(z_1,\dots, z_m)$. It follows that $(x_1,\dots, x_m)$ and $(y_1,\dots, y_m)$ are \emph{identically} distributed.



\subsection{Indistinguishability between distributions} Let $\epsilon,\delta\in[0,1]$ be two parameters. We say that $D_0$ and $D_1$ are $(\epsilon,\delta)$-indistinguishable, if there exist $D'_0,D'_1, D^e_0, D^e_1$ so that we can write $D_b$ as a mixture distribution $D_b = (1-\delta)D'_b + \delta D^e_b$ for both $b\in \{0,1\}$, and $D'_0$ and $D'_1$ have max-divergence bounded by $e^\epsilon$ in \emph{both} directions. More formally, for every $\omega$, it holds that $e^{-\epsilon} \Pr[D'_1 = \omega] \le \Pr[D'_0=\omega] \le e^\epsilon \Pr[D'_1 = \omega]$. Note that it follows by definition that $D'_0,D'_1$ are a pair of $(\epsilon, 0)$-indistinguishable distributions.

The following lemma in the differential privacy literature will be key to our analysis.

\begin{lemma}[See e.g.~\cite{DworkRV10}]\label{lem:dp}
    For every $\gamma \in (0,1/2)$ and $t\in \mathbb{N}$, and any desired $\delta \in [0,1]$, it holds that $\mathrm{Ber}(\frac{1}{2} - \gamma)^{\otimes t}$\footnote{Here we use $D^{\otimes t}$ to denote the distribution over $t$ independent samples from $D$.} and $\mathrm{Ber}(\frac{1}{2} - \gamma)^{\otimes t}$ are $(O(\gamma \sqrt{t\log(1/\delta)}), \delta)$-indistinguishable.
\end{lemma}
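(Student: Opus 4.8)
The plan is to recognize this as the advanced composition theorem of differential privacy \cite{DworkRV10} specialized to Bernoulli mechanisms --- reading the statement as comparing $\mathrm{Ber}(\tfrac12-\gamma)^{\otimes t}$ with $\mathrm{Ber}(\tfrac12+\gamma)^{\otimes t}$ (as typeset the two distributions coincide), and treating $\gamma$ as bounded away from $\tfrac12$, say $\gamma\le\tfrac14$, which is the regime arising in the application --- and then to follow the usual three-step route. First I would record the single-coordinate guarantees for the pair $\mathrm{Ber}(\tfrac12-\gamma)$ versus $\mathrm{Ber}(\tfrac12+\gamma)$. Second I would compose these across the $t$ independent coordinates via a moments (privacy-loss) argument. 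Third I would translate the resulting concentration of the privacy loss back into the mixture form required by the definition of $(\epsilon,\delta)$-indistinguishability.

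For the first step, put $L_0:=\ln\frac{1/2+\gamma}{1/2-\gamma}$ and note $L_0=\Theta(\gamma)$ for $\gamma\le\tfrac14$. The per-coordinate privacy-loss variable --- the logarithm of the density ratio of one distribution to the other at a sample drawn from either --- takes only the two values $\pm L_0$, so a single pair is $(L_0,0)$-indistinguishable with $L_0=O(\gamma)$; its mean is the KL divergence $D_{\mathrm{KL}}(\mathrm{Ber}(\tfrac12\mp\gamma)\,\|\,\mathrm{Ber}(\tfrac12\pm\gamma))=2\gamma L_0=\Theta(\gamma^2)$; and, being supported on an interval of length $2L_0$, it satisfies (Hoeffding's lemma) the sub-Gaussian moment bound $\E\big[e^{\alpha(L-\E L)}\big]\le e^{\alpha^2L_0^2/2}=e^{O(\gamma^2\alpha^2)}$ for every real $\alpha$. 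For the second step, the log density ratio of the two $t$-fold products is the sum $\sum_{i=1}^t L_i$ of $t$ i.i.d.\ copies of $L$; a Chernoff bound with the above moment estimate shows that $\sum_i L_i$ exceeds $t\,\E[L]+\lambda=\Theta(t\gamma^2)+\lambda$ with probability at most $e^{-\lambda^2/(2tL_0^2)}$, so the choice $\lambda=L_0\sqrt{2t\log(1/\delta)}=\Theta\big(\gamma\sqrt{t\log(1/\delta)}\big)$ brings this below $\delta$. Because $\mathrm{Ber}(\tfrac12\pm\gamma)$ are exchanged by relabelling $0\leftrightarrow1$, the privacy loss in the reverse direction obeys the same tail bound, so with probability at least $1-\delta$ both directional privacy losses are at most $\epsilon:=O\big(t\gamma^2+\gamma\sqrt{t\log(1/\delta)}\big)$. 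The third step is then routine: a two-sided ``privacy loss $\le\epsilon$ except with probability $\le\delta$'' guarantee is equivalent to a decomposition $D_b=(1-\delta)D_b'+\delta D_b^e$ with $D_0',D_1'$ of max-divergence at most $e^\epsilon$, obtained by absorbing into $D_b^e$ the $\le\delta$ probability mass on which the loss is large.

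I expect the conceptual heart to be the composition step: naive composition of $t$ copies of an $O(\gamma)$-indistinguishable pair gives only $O(\gamma t)$, and it is exactly the sub-Gaussian moment bound that upgrades this to $O(\gamma\sqrt{t\log(1/\delta)})$ plus a lower-order term. The single point that needs genuine care --- and the reason I would not simply quote a textbook advanced-composition bound --- is the additive $t\gamma^2$ summand: it equals $t$ times the nonzero divergence $D_{\mathrm{KL}}(\mathrm{Ber}(\tfrac12-\gamma)\,\|\,\mathrm{Ber}(\tfrac12+\gamma))$ and, being the tail of a privacy loss whose mean is already $\Theta(t\gamma^2)$, cannot be removed when $\delta$ is small. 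Hence the clean bound $O(\gamma\sqrt{t\log(1/\delta)})$ in the statement should be read in the regime $t\gamma^2=O(\log(1/\delta))$ --- precisely the parameter regime of our reduction, where $\gamma=\epsilon/2$ is small and $\delta$ is chosen so that $\log(1/\delta)$ dominates $t\gamma^2$ --- and otherwise one keeps the $+\,t\gamma^2$ term, matching the form of the bound in \cite{DworkRV10}. A minor secondary caveat is the assumption $\gamma\le\tfrac14$: as $\gamma\to\tfrac12$ a single sample becomes nearly perfectly informative, $L_0\to\infty$, and no bound of this shape can hold --- but this regime does not occur in the application.
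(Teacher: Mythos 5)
Your proposal is correct and is essentially the intended argument: the paper gives no proof of Lemma~\ref{lem:dp} beyond citing \cite{DworkRV10}, and what you write out --- the $\pm L_0$ privacy-loss variable with mean $\Theta(\gamma^2)$, Hoeffding/Chernoff concentration of its $t$-fold sum, and the standard conversion of a two-sided tail bound into the $(\epsilon,\delta)$ mixture decomposition --- is exactly the advanced-composition proof the citation points to. Your two caveats are also well taken: the repeated $\mathrm{Ber}(\frac12-\gamma)$ is a typo for $\mathrm{Ber}(\frac12+\gamma)$, and the clean bound $O(\gamma\sqrt{t\log(1/\delta)})$ indeed requires the regime $t\gamma^2=O(\log(1/\delta))$ (otherwise the $+\,t\gamma^2$ term from the mean of the privacy loss must be kept), which is satisfied in both invocations of the lemma in the proof of Theorem~\ref{thm:hidden-sign-lb} thanks to the hypothesis $n\le O(m/\epsilon^2)$.
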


%

\subsection{Unique Set Disjointness}

\paragraph{Unique Set-Disjointness.} In the Unique Set-Disjointness (UDISJ) game, Alice and Bob each is given a subset $S,T\subseteq[3n]$ with $|S|=|T|=n$ respectively, and promised that either $|S\cap T|=1$ or $0$. Their goal is to determine $|S\cap T|$.



\begin{lemma}[\citep{stoc13,random14,dinur2020streaming}]\label{lem:udisj}
Any public-coin randomized protocol for UDISJ with advantage $\gamma$ must communicate at least $\frac{1}{20}\gamma n-20\log n$ bits in the worst case. In other words, any $C$-bit public-coin randomized protocol for UDISJ has advantage
\[
\mathrm{Adv}=\min_{(S,T)\in UDIJS^{-1}(0)}\{\Pr[\Pi(S,T)=0\}+\min_{(S,T)\in UDIJS^{-1}(1)}\{\Pr[\Pi(S,T)=1\}-1\leq \frac{20C}{n}+\frac{400\log n}{n}.
\]
\end{lemma}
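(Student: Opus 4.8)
This lemma is the standard ``small-advantage'' strengthening of the $\Omega(n)$ communication lower bound for Set Disjointness, and I would reconstruct it along the lines of the cited works rather than from scratch. First I would invoke Yao's minimax principle to reduce to the distributional setting: it suffices to produce a single input distribution under which every \emph{deterministic} $C$-bit protocol has average advantage $O\big((C+\log n)/n\big)$, since the worst-case advantage is bounded by the average advantage over any distribution supported on $\mathrm{UDISJ}^{-1}(0)\cup\mathrm{UDISJ}^{-1}(1)$, and the public-coin case then follows by averaging over the coins. For the hard distribution I would use the Razborov-style construction with a hidden per-coordinate direction: draw $D\in\{\mathsf{A},\mathsf{B}\}^{3n}$ uniformly and, conditioned on $D$, draw $(S,T)$ from the product distribution in which coordinate $j$ is placed into $S$ (only) if $D_j=\mathsf{A}$ and into $T$ (only) if $D_j=\mathsf{B}$, each ``on'' with a constant probability tuned so that $\E[|S|]=\E[|T|]=n$. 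This is always a $0$-instance; the $1$-distribution $\mu_1$ is obtained by additionally planting $(1,1)$ at a uniformly random coordinate $j^\star$. Since these produce only the right \emph{expected} sizes, I would finally condition both $\mu_0$ and $\mu_1$ on $\{|S|=|T|=n\}$, an event of probability $\Theta(1/n)$; this conditioning is where the additive $O(\log n)$ loss enters.

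The crux is a refined corruption (discrepancy) estimate: for every combinatorial rectangle $R=A\times B$ in the $(\text{Alice},\text{Bob})$ input space,
\[
\mu_1(R)-\mu_0(R)\ \le\ \frac{O(1)}{n}\,\mu_0(R)\,\log\!\Big(\tfrac{e}{\mu_0(R)}\Big)\ +\ 2^{-\Omega(n)}.
\]
To prove it I would condition on the hidden direction $D$: then $\mu_0(\cdot\mid D)$ is a genuine product distribution across the two sides, so $\mu_0(R\mid D)=\alpha_D\beta_D$ for marginal rectangle-masses $\alpha_D,\beta_D$, whereas $\mu_1(R\mid D)$ is the average over $j^\star$ of products of ``one-coordinate-forced'' marginals. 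Comparing the two reduces to understanding how forcing a single coordinate of $S$ to $1$ changes the chance of landing in $A$, which is controlled by the covariance between $\mathbf{1}[S\in A]$ and $|S|$; since $|S|$ conditioned on $D$ is a sum of $\Theta(n)$ independent bits, a moderate-deviation (Binomial tail) bound shows this covariance, hence the per-$D$ discrepancy, is at most $\mu_0(R\mid D)$ times $O\big(\log(1/\mu_0(R\mid D))/n\big)$. Averaging back over $D$, together with the exact-size conditioning, is what produces the stated bound with the extra $2^{-\Omega(n)}$ (and, tracked carefully, $\log n$) slack.

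To finish, observe that a deterministic $C$-bit protocol partitions the input space into at most $2^C$ rectangles, the ones labelled ``$1$'' accounting for exactly the advantage $\sum_{R:\,\Pi(R)=1}\big(\mu_1(R)-\mu_0(R)\big)$. Summing the corruption estimate over these rectangles and using
\[
\sum_{R}\mu_0(R)\log\!\Big(\tfrac{e}{\mu_0(R)}\Big)\ =\ H_{\mu_0}(\text{transcript})+O(1)\ \le\ C+O(1),
\]
since a worst-case $C$-bit transcript has entropy at most $C$, bounds the average advantage by $O\big((C+\log n)/n\big)$; re-deriving constants gives the claimed $\tfrac{20C}{n}+\tfrac{400\log n}{n}$.

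The step I expect to be the real obstacle is the refined rectangle estimate with the correct $\log(1/\mu_0(R))$ dependence. A crude version of the covariance/anti-concentration argument only yields a per-rectangle loss of order $\sqrt{\mu_0(R)/n}$, which after summing against the partition entropy degrades the final bound to $C=\Omega(\gamma^2 n)$ — exactly what one also gets from black-box advantage amplification. Squeezing out the linear-in-$\gamma$ bound requires simultaneously exploiting the product structure, the exact-size restriction, and sharp Binomial tail estimates, and this is precisely the technical content carried out in \cite{stoc13,random14,dinur2020streaming}.
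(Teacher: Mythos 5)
The paper never proves this lemma; it is imported verbatim (with its explicit constants) from \cite{stoc13,random14,dinur2020streaming}, so the only thing to assess is whether your reconstruction is itself a viable proof route, and it is not: the central rectangle estimate you propose is false. Take your own hard distributions: under $\mu_0$ the pair $(S,T)$ is always disjoint, while $\mu_1$ plants a common element at a uniformly random coordinate $j^\star$. For a fixed coordinate $j_0$ consider the rectangle $R=\{S:\,j_0\in S\}\times\{T:\,j_0\in T\}$. Then $\mu_0(R)=0$ (exactly, and still $0$ after the exact-size conditioning), while $\mu_1(R)=\Pr[j^\star=j_0]=\Theta(1/n)$, so $\mu_1(R)-\mu_0(R)=\Theta(1/n)$, which is not bounded by $O(1/n)\,\mu_0(R)\log\bigl(e/\mu_0(R)\bigr)+2^{-\Omega(n)}$; the same example defeats the symmetrized variants in which $\mu_0(R)$ on the right-hand side is replaced by $\mu_1(R)$ or $\mu_0(R)+\mu_1(R)$. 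This is not a matter of constants: a single rectangle genuinely can absorb a full $\Theta(1/n)$ chunk of advantage while carrying negligible $\mu_0$-mass, so no pointwise bound in terms of $(\mu_0(R),\mu_1(R))$ alone, summed against the transcript-entropy bound $\sum_R\mu_0(R)\log(1/\mu_0(R))\le C$, can certify advantage $O((C+\log n)/n)$: your accounting would permit up to $2^C$ such rectangles and hence only gives something like $\min(1,2^C/n)$, i.e.\ exactly the degradation you flag in your last paragraph. What actually rules out a low-communication partition containing many rectangles $A_j\times B_j\subseteq\{S\ni j\}\times\{T\ni j\}$, each capturing most of the mass of $\{j^\star=j\}$, is a global property of partitions (pairwise disjointness of the rectangles forces the families $A_j$ or $B_j$ to be essentially disjoint and hence small), and this is precisely why the cited proofs do not proceed via a per-rectangle corruption inequality of your form: Braverman--Moitra lower-bound the information cost of computing a single AND with advantage $\gamma$ by $\Omega(\gamma)$ and apply a direct-sum argument, G\"o\"os--Watson derive the all-probabilities set-disjointness bounds from it, and Dinur restates the result in the advantage form with the constants $20$ and $400$ quoted here.

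Two secondary remarks. The framing steps are fine (Yao's principle, and writing the distributional advantage as $\sum_{R:\,\Pi(R)=1}(\mu_1(R)-\mu_0(R))$), but the claim that conditioning on the $\Theta(1/n)$-probability event $\{|S|=|T|=n\}$ ``costs an additive $O(\log n)$'' is unsubstantiated as stated (the additive $\log n$ term in the lemma has a different origin), and the $2^{-\Omega(n)}$ error summed over up to $2^C$ rectangles requires a separate (easy) case distinction for $C=\Omega(n)$. Since the key inequality is false rather than merely difficult, the proposal does not constitute a proof; if you do not want to reproduce the information-complexity argument of the cited works, the right move is to do what the paper does and invoke the lemma as a black box.
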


\section{Hybrid Methods}\label{hybrid}


Let $D_{\mathrm{unif}}$ denote a stream of $n$ uniform samples from $[2m]$, and $D_{\mathrm{bias}}$ a stream of $n$ elements drawn from the planted bias distribution. We aim to upper bound the achievable advantage on distinguishing between $D_{\mathrm{unif}}$ and $D_{\mathrm{bias}}$.
\begin{definition}[Advantage]
 Given two distributions $D_0,D_1$ of random streams, we define 
\[
\mathrm{Adv}_{s,\ell}(D_0,D_1)=\max_{\mbox{s-space $\ell$-pass algorithm }\mathcal{A}}\left|\Pr[\mathcal{A}(D_0)=1]-\Pr[\mathcal{A}(D_1)=1]\right|
\]
We will omit the subscripts $s,\ell$ if they are clear from the context.
\end{definition}

\subsection{Hybrid Distributions}

For a sign vector $Y\in \{\pm 1\}^m$, define $P_{Y}$ as the planted bias distribution with bias direction $Y$. Equivalently, we can view $P_Y$ as a distribution over $[m]\times\{0,1\}$: to draw a sample $(i,w)$ from $P_Y$, one first draws $i$ uniformly from $[m]$; conditioning on $i$, one draws $w\sim \ber(1/2-\epsilon Y_i)$. Intuitively, $Y$ specifies a sequence of $\epsilon$-biased coins, where the $i$-th coin is biased according to $Y_i\in\{\pm 1\}$. The distribution $P_Y$ is just the uniform distribution over coins plus one flip from the selected coin.

Consider the following definition.
\begin{definition}[hybrid distributions] 
Suppose $k\in[n]$ and $k$ divides $n$. Let $D^{k,\frac{n}{k}}$ be a distribution over sequences of elements. To sample from $D^{k,\frac{n}{k}}$, we first draw $k$ sequences $X^1,\dots, X^k$ as follows. For each $i\in [k]$:
\begin{enumerate}
\item first draw a $Y^{i}\sim\{0,1\}^m$;
\item then draw $X^i$ as a sequence of $\frac{n}{k}$ independent samples from $P_{Y^i}$. 
\end{enumerate}
Finally, we define the concatenation of $X^1,\dots, X^k$ as our final sequence.
\end{definition}



It is easy to observe that $D_{\mathrm{bias}} \equiv D^{1,n}$ and $D_{\mathrm{unif}}\equiv D^{n,1}$. In the following, we analyze the distinguishing advantage between $D^{1,n}$ and $D^{n,1}$.




\subsection{The Hybrid Method}

The core of our argument is the following derivation. Using the triangle inequality, we see that:
\begin{align*}
\mathrm{Adv}(D_{\mathrm{unif}},D_{\mathrm{bias}}) &= \mathrm{Adv}\left(D^{n,1},D^{1,n}\right)\\
&\leq\mathrm{Adv}\left(D^{n,1},D^{2,\frac{n}{2}}\right)+\mathrm{Adv}\left(D^{2,\frac{n}{2}},D^{1,n}\right)\\
&\leq \mathrm{Adv}\left(D^{n,1},D^{\frac{n}{2},1}\circ D^{1,\frac{n}{2}}\right)+\mathrm{Adv}\left(D^{\frac{n}{2},1}\circ D^{1,\frac{n}{2}},D^{2,\frac{n}{2}}\right)+\mathrm{Adv}\left(D^{2,\frac{n}{2}},D^{1,n}\right)\\
&\le 2\mathrm{Adv}\left(D^{\frac{n}{2},1},D^{1,\frac{n}{2}}\right)+\mathrm{Adv}\left(D^{2,\frac{n}{2}},D^{1,n}\right)
\end{align*}
We justify the inequalities. The second line is clearly the triangle inequality. The third line is by applying the triangle inequality among the triple $(D^{n,1}\to D^{n/2,1}\to D^{1,n/2}, D^{2,n/2})$.

To see the last line, we can first write $D^{n,1}$ as $D^{n/2,1}\circ D^{n/2,1}$ and use that $\mathrm{Adv}(D^{n/2,1},D^{n/2,1}) \ge  \mathrm{Adv}(D^{n/2,1}\circ E, D^{n/2,1}\circ E)$ for any $E$. We apply similar reasoning to the term $\mathrm{Adv}\left(D^{\frac{n}{2},1}\circ D^{1,\frac{n}{2}},D^{2,\frac{n}{2}}\right)$.

\paragraph*{The recursion.} Define $f(n)=\mathrm{Adv}(D^{n,1},D^{1,n})$. Then we have the following recursion:
\[
f(n)\leq 2f(n/2)+\mathrm{Adv}\left(D^{2,\frac{n}{2}},D^{1,n}\right)
\]

Note that if we can prove $\mathrm{Adv}(D^{2,n/2},D^{1.n})\le K\cdot n$, it would follow that $f(n) \le K n \log(n)$. Thus, to prove Theorem \ref{thm:main0}, it remains to prove the following theorem. 
\begin{theorem}\label{prop:hybrid} 
We have $\mathrm{Adv}\left(D^{2,\frac{n}{2}},D^{1,n}\right)=O\left(\frac{(s\ell+\log(n))\cdot \epsilon^2}{m}\cdot n\cdot \log^2(m)\right)$, 
\end{theorem}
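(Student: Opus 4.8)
The plan is to prove Theorem~\ref{prop:hybrid} by reducing the streaming distinguishing task $\mathrm{Adv}(D^{2,n/2},D^{1,n})$ to the two-player hidden-sign communication problem described in the proof approach, and then invoking the communication lower bound (which is established separately in Section~\ref{sec:hiddensign}). Recall that $D^{1,n}$ is a single block of $\poi(n)$ samples drawn from $P_Y$ for one hidden direction $Y$, whereas $D^{2,n/2}$ is two concatenated blocks, of $\poi(n/2)$ samples each, drawn from $P_{Y^a}$ and $P_{Y^b}$ with $Y^a,Y^b$ independent. So the distinguishing problem is exactly: given a stream that is the concatenation of a first half and a second half, each a $\poi(n/2)$-sample block from an $\epsilon$-biased coin vector, decide whether the two halves use the \emph{same} bias vector or \emph{independent} ones. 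This is precisely the ``identical vs.\ independent sign vectors, seen through noisy samples'' task.

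The reduction goes as follows. Suppose $\mathcal A$ is an $s$-space, $\ell$-pass algorithm achieving advantage $\alpha := \mathrm{Adv}(D^{2,n/2},D^{1,n})$. I would have Alice simulate the first half of the stream and Bob the second half; Alice holds (implicitly) $Y^a$ and a $\poi(n/2)$-sample block from $P_{Y^a}$, Bob holds $Y^b$ and a $\poi(n/2)$-sample block from $P_{Y^b}$, and the two players want to decide whether $Y^a=Y^b$ (the $D^{1,n}$ case, after we arrange the common-direction coupling) or $Y^a\perp Y^b$ (the $D^{2,n/2}$ case). To run $\mathcal A$ on the concatenated stream, Alice processes her half, sends Bob the memory state ($s$ bits), Bob processes his half, and for the next pass Bob sends the state back to Alice — so each pass costs $2s$ bits of communication, and $\ell$ passes cost $C = 2s\ell$ bits (up to an additive $O(\log n)$ for bookkeeping and the final answer bit). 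This converts the streaming distinguisher into a $C$-bit communication protocol with the same advantage $\alpha$ for the hidden-sign problem with parameters $(m,n/2,\epsilon)$ (here each coordinate is seen $\poi((n/2)/m)$ times per player, matching Definition~\ref{def:d-game}'s noise model up to the factor-$2$ rescaling of $n$, which is absorbed into the $\tilde O$). One subtlety to handle carefully: in $D^{1,n}$ the single block has length $\poi(n)$, not the concatenation of two independent $\poi(n/2)$ blocks with a common direction — but by the additive property of Poisson these are identically distributed, so splitting the $\poi(n)$ block into a $\poi(n/2)$ prefix and $\poi(n/2)$ suffix (each from the same $P_Y$) is exactly the ``identical'' case of the communication problem. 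Similarly the ``$i$ drawn uniformly from $[m]$ then one flip from coin $i$'' view of $P_Y$, Poissonized, gives exactly $\poi((n/2)/m)$ i.i.d.\ $\mathrm{Ber}(1/2\pm\epsilon)$ samples per coordinate per player, which is the hidden-sign input distribution.

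With the reduction in place, I would apply the hidden-sign lower bound from Section~\ref{sec:hiddensign}, which states that any $C$-bit protocol distinguishing identical from independent sign vectors in this noise model has advantage $\tilde O(C\epsilon^2\cdot(n/m))$. Plugging in $C = O(s\ell+\log n)$ yields $\alpha = \tilde O((s\ell+\log n)\epsilon^2 n/m)$, and tracking the polylog factors carefully gives the stated bound $O\big(\frac{(s\ell+\log n)\epsilon^2}{m}\cdot n\cdot\log^2 m\big)$. The main obstacle, I expect, is not the reduction itself (which is the standard streaming-to-communication round-elimination argument) but making sure the coupling is exactly right — in particular that the ``same direction $Y$, two Poissonized sub-blocks'' decomposition of $D^{1,n}$ and the ``uniform coordinate plus one biased flip'' view of $P_Y$ both Poissonize cleanly into the per-coordinate $\poi(n/2m)$-sample noise channel of the hidden-sign problem, with no distributional slippage — since any error there would break the clean correspondence between $\mathrm{Adv}(D^{2,n/2},D^{1,n})$ and the communication advantage. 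Of course, the genuinely hard content is deferred: it lives in the proof of the hidden-sign lower bound (the reductions to Unique Set Disjointness for $n\le m$ and the differential-privacy-style amplification for $n\ge m$), which this theorem takes as a black box.
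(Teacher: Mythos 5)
Your proposal is correct and follows essentially the same route as the paper: the paper itself proves Theorem~\ref{prop:hybrid} simply by asserting that the streaming task reduces to the Hidden Sign Problem (Definition~\ref{def:d-game}) via the standard ``Alice holds the first block, Bob the second, each pass costs $O(s)$ bits'' simulation, and then invoking Theorem~\ref{thm:hidden-sign-lb} with $C=O(s\ell+\log n)$. In fact you spell out the Poisson-splitting and per-coordinate Poissonization couplings more explicitly than the paper does, and those details are handled correctly.
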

In order to prove \Cref{prop:hybrid}, it suffices to consider its corresponding two-player communication problem, called Hidden Sign Problem (HSP).

\begin{definition}[Hidden Sign Problem, I]\label{def:d-game}
Let $\epsilon\in[0,1]$ and $n\in \mathbb{N}$. Consider the following two cases.
\begin{itemize}
\item Case ($\perp$): Draw $Y^a,Y^b\sim\{\pm 1\}^m$ independently. 
\item Case (=): Draw $Y^a=Y^b\sim\{\pm 1\}^m$. 
\end{itemize}
Once $Y^a,Y^b$ are drawn, Alice (resp.~Bob) receives a sequence of $n/2$ samples from $P_{Y^a}$ (resp.~$P_{Y^b}$). Their goal is to distinguish between the two cases.
\end{definition}

We will apply a Poissonization trick and analyze the following variant of the Hidden Sign Problem.

\begin{definition}[Hidden Sign Problem, II]\label{def:d-game}
Let $\epsilon\in[0,1]$ and $n\in \mathbb{N}$. Consider the following two cases.
\begin{itemize}
\item Case ($\perp$): Draw $Y^a,Y^b\sim\{\pm 1\}^m$ independently. 
\item Case (=): Draw $Y^a=Y^b\sim\{\pm 1\}^m$. 
\end{itemize}
Once $Y^a,Y^b$ are drawn, Alice and Bob receive inputs sampled as follows: 
\begin{itemize}
\item For each $i\in[m]$, Alice draw $\mathrm{Poi}(n/m)$ independent samples $\sim \ber(1/2+Y^a_i\epsilon/2)$;
\item For each $i\in[m]$, Bob draw $\mathrm{Poi}(n/m)$ independent samples $\sim \ber(1/2+Y^b_i\epsilon/2)$;
\end{itemize}
Their goal is to distinguish between the above two cases.
\end{definition}

\paragraph*{A reduction.} An equivalent way to state the input distributions for Version II of the Hidden Sign Problem is that each player will first draw an integer $n_a$ (resp.~$n_b$) from $\mathrm{Poi}(n)$, and then draw $n_a$ ($n_b$) samples from $P_{Y^a}$ (resp.~$P_{Y^b}$). It is a standard fact that $\Pr[\mathrm{Poi}(n)\ge \frac{n}{2}] \ge \frac{1}{2}$. As a consequence, with probability at least $\frac{1}{4}$, we have $\min(n_a, n_b)\ge n$. Thus, if we had a $C$-bit communication protocol that solves Version I of HSP with advantage $\xi$, we could then design a $(C+2)$-bit protocol that solves Version II with advantage at least $\frac{\xi}{4}$ (both players use $2$ bits to check that $\min(n_a,n_b)\ge n$, and run the assumed protocol with the first $n$ samples if the check passes). 

Given this reduction, the rest of the paper will focus on proving a communication lower bound for the Hidden Sign Problem (Version II). That is, we will prove the following theorem.

\begin{theorem}\label{thm:hidden-sign-lb}
For every $\epsilon, m$, $n\leq O(m/\epsilon^2)$, and $C\geq \log m$, it holds that that any $C$-bit protocol for the Hidden Sign Problem (II) achieves a distinguishing advantage of at most $O\left(\log^2(m)\cdot \frac{n}{m} \cdot C\cdot \epsilon^2 \right)$.
\end{theorem}

We note that \Cref{thm:hidden-sign-lb} implies \Cref{prop:hybrid}, and hence the main result of our paper.

\section{Proof of \Cref{thm:hidden-sign-lb}}\label{sec:hiddensign}

In this section, we prove \Cref{thm:hidden-sign-lb}. This is achieved via a series of reductions.

\subsection{The Hidden Index Problem}

\paragraph*{Introducing HIP.} As the first step, let us consider the following two-player communication complexity game, called Hidden Index Problem (Definition \ref{def:hip}), and prove its lower bound (Theorem \ref{thm:hip}).

\begin{definition}[Hidden Index Problem, $\mathrm{HIP}_{n}$]\label{def:hip}
Alice and Bob each is given a randomized string $a,b\in\{1,0,\star\}^{3n}$ sampled in the following way. First, $\supp(a)$ and $\supp(b)$ are sampled as follows\footnote{We define $\supp(a)$ as $\{i:a_i\neq \star\}$.}.
\begin{itemize}
\item Draw $i\sim[3n]$, and two \emph{disjoint} random subsets $S_1,S_2$ each of size $n-1$ from $[3n]\setminus\{i\}$.
\item $\mathrm{supp}(a)=S_1\sqcup\{i\}$ and $\mathrm{supp}(a)=S_2\sqcup\{i\}$.
\end{itemize}
Note that $|a|=|b|=n$ and $|a\cap b|=1$. Their goal is to distinguish the following two cases: 
\begin{itemize}
\item $D_=$:  $a_i=b_i\sim\{0,1\}$; 
\item $D_\neq$:  $a_i=1-b_i\sim\{0,1\}$.
\end{itemize}
For any other $a_j$ or $b_k$ in the support, it is independently uniformly drawn from $\{0,1\}$. Their goal is to distinguish between the above two cases.
\end{definition}

For a randomized communication protocol $\Pi$ , its advantage for $\mathrm{HIP}_n$ is defined as
\[
\adv_{n}^{HIP}(\Pi)=\Pr_{D_=}[\Pi(a,b)\mbox{ outputs} \text{``$=$''}]+\Pr_{D_\neq}[\Pi(a,b)\mbox{ outputs} \text{``$\ne$''}]-1.
\]

The following proposition asserts that we can assume that the algorithm performs at least as good as random guesses.

\begin{proposition}\label{prop:hip}
If there exists a $C$-bit protocol $\Pi$ for HIP with advantage $\gamma\geq 0$, then there exists a related $C$-bit protocol $\Pi'$ such that both $\Pr_{D_=}[\Pi(a,b)\mbox{ outputs} \text{``$=$''}]$ and $\Pr_{D_\neq}[\Pi(a,b)\mbox{ outputs} \text{``$\ne$''}]$ are $\geq 1/2+\gamma/10$.
\end{proposition}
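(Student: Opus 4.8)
The plan is to symmetrize the protocol $\Pi$ so that both of its one-sided success probabilities are simultaneously above $1/2$ by a constant fraction of $\gamma$, using the fact that the two cases $D_=$ and $D_\neq$ are related by a symmetry of the input distribution. Concretely, first observe that the input distribution of $\mathrm{HIP}_n$ is invariant under the involution that flips Bob's bit at the secret index: if $(a,b)$ is distributed according to $D_=$, then flipping $b_i$ (equivalently, relabelling Bob's bit at the unique shared coordinate) produces a sample from $D_\neq$, and vice versa. However, Bob does not know which coordinate is the shared one, so he cannot apply this map directly. The key point is that Bob \emph{can} apply the map ``flip the bit at a uniformly random coordinate of $\mathrm{supp}(b)$'': with probability $1/n$ this flips exactly the shared coordinate (turning $D_=$ into $D_\neq$), and the remaining coordinates are already uniform i.i.d. bits, so flipping one of them leaves the joint distribution unchanged. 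This suggests defining an auxiliary protocol that runs $\Pi$ on such a randomly-perturbed input, but the $1/n$ success rate is too weak on its own, so we instead combine it with the vanilla $\Pi$.

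The key steps, in order, are: (1) Let $p_= = \Pr_{D_=}[\Pi \text{ outputs ``}=\text{''}]$ and $q_\neq = \Pr_{D_\neq}[\Pi \text{ outputs ``}\neq\text{''}]$, so $p_= + q_\neq = 1 + \gamma$. Without loss of generality (swapping the output labels of $\Pi$ globally, which swaps the roles of the two cases) we may assume $p_= \ge q_\neq$, hence $p_= \ge 1/2 + \gamma/2$; the difficulty is that $q_\neq$ could be as small as $1/2 - \gamma/2 + \gamma = 1/2 + \gamma/2$... wait, $q_\neq = 1 + \gamma - p_=$ could be near $1 - p_=$, which is small when $p_=$ is large. (2) Build $\Pi'$ as a randomized mixture: with some probability $\alpha$, run $\Pi$ directly and report its answer; with probability $1-\alpha$, have Bob first flip the bit at a uniformly random coordinate of $\mathrm{supp}(b)$, run $\Pi$ on the modified input, and report the \emph{opposite} of its answer. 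Under $D_=$, the modified-input branch sees $D_\neq$ with probability $1/n$ and sees (a distribution statistically identical to) $D_=$ with probability $1-1/n$, so it outputs ``$=$'' with probability $\tfrac1n q_\neq + (1-\tfrac1n)(1-p_=)$; a symmetric statement holds under $D_\neq$. (3) Choose $\alpha$ (and possibly iterate the construction, or average over both this map and the label-swap) to balance the two resulting success probabilities; a short computation shows one can reach both probabilities $\ge 1/2 + \gamma/10$, the constant $10$ giving ample slack. The communication cost is unchanged since we only add private randomness and bit-flips.

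The main obstacle I anticipate is step (3): verifying that some convex combination of the available ``moves'' — the raw protocol $\Pi$, the label-swapped $\Pi$, and the Bob-perturbation trick — actually lands both one-sided probabilities above $1/2 + \gamma/10$, rather than merely pushing their sum above $1 + \gamma/10$. The subtlety is that the perturbation move only injects a $1/n$ fraction of "correct" signal, so naively it degrades the advantage by a factor $n$; the construction must be set up so that the \emph{min} of the two success probabilities, not just their sum, improves. The cleanest route is probably: if $\min(p_=, q_\neq) \ge 1/2 + \gamma/10$ already, take $\Pi' = \Pi$; otherwise one of them, say $q_\neq$, is below $1/2 + \gamma/10$ while $p_= \ge 1/2 + 9\gamma/10$, and then a tiny admixture of the perturbation-and-negate move (which, run alone, outputs ``$\neq$'' under $D_\neq$ with probability $\ge (1-1/n)p_= - 1/n \ge 1/2$ for large enough $n$, while barely changing behavior under $D_=$) can be balanced against $\Pi$ to equalize. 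I would formalize this by writing the pair of success probabilities as a point in $[0,1]^2$ and exhibiting an explicit convex combination of at most three such points whose coordinates both exceed $1/2 + \gamma/10$; the arithmetic is elementary but needs to be done carefully, and one should double-check the edge case $\gamma = 0$ (where the claim is trivially true) and small $n$.
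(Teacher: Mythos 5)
There is a genuine gap: the set of ``moves'' you allow yourself provably cannot reach the stated conclusion, and the missing move is the trivial one. Record each candidate protocol as the point $\bigl(\Pr_{D_=}[\text{output ``}=\text{''}],\,\Pr_{D_\neq}[\text{output ``}\neq\text{''}]\bigr)\in[0,1]^2$. Your raw protocol is $M_1=(p_=,q_\neq)$ with coordinate sum $1+\gamma$; the globally label-swapped protocol is $M_2=(1-p_=,1-q_\neq)$, whose coordinate sum is $1-\gamma$, so label-swapping \emph{negates} the advantage rather than ``swapping the roles of the two cases'' (it therefore does not justify assuming $p_=\ge q_\neq$; that reduction is fine, but only because the other case can be handled symmetrically). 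A direct computation shows your flip-a-random-coordinate-and-negate move is exactly $M_3=M_2+\frac{\gamma}{n}(1,1)$: its success probability under $D_\neq$ is $\frac{1}{n}p_=+(1-\frac{1}{n})(1-q_\neq)$, not the ``$(1-1/n)p_=-1/n$'' you wrote, and under $D_=$ its success probability is about $1-p_=$, so it does \emph{not} ``barely change behavior under $D_=$.'' Consequently every move you propose lies within $O(\gamma/n)$ of $M_1$ or $M_2$, and in the worst case $p_==1$, $q_\neq=\gamma$ the segment between $(1,\gamma)$ and $(0,1-\gamma)$ balances only at $(\tfrac12,\tfrac12)$; the $M_3$ shift buys only $\gamma/n$. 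Hence for $n$ larger than a constant no convex combination of $\{M_1,M_2,M_3\}$ has both coordinates $\ge \tfrac12+\gamma/10$, so step (3) of your plan cannot be completed.

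The fix is much simpler than input symmetrization and is what the paper does: include the \emph{constant} protocols, i.e.\ the extreme points $(1,0)$ and $(0,1)$, among your moves, and mix $\Pi$ with the constant answer on the currently weaker side. Assuming $p_=\ge q_\neq$ (otherwise argue symmetrically with the constant ``$=$''), run $\Pi$ with probability $1-\eta$ and output ``$\neq$'' outright with probability $\eta$, where $\eta=\frac{p_=-q_\neq}{1+p_=-q_\neq}$ equalizes the two success probabilities; both then equal $\frac{p_=}{2p_=-\gamma}\ge\frac{1}{2-\gamma}\ge\frac12+\frac{\gamma}{4}$, comfortably above $\frac12+\frac{\gamma}{10}$, with no extra communication. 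The hidden-index structure and the $1/n$-signal perturbation play no role in this proposition.
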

\begin{proof}
Let $p_1$ and $p_2$ abbreviate $\Pr_{D_=}[\Pi(a,b)\mbox{ outputs} \text{``$=$''}]$ and $\Pr_{D_\neq}[\Pi(a,b)\mbox{ outputs} \text{``$\ne$''}]$ respectively. We have $p_1+p_2=1+\gamma$, and we assume $p_1>p_2$ with loss of generality. The new protocol $\Pi'$ is constructed as follows. Let $\eta:=\frac{p_1-p_2}{1+p_1-p_2}$. Then w.p. $(1-\eta)$, it runs $\Pi$; otherwise, it outputs $\text{``$\ne$''}$ directly. Noting that $p_1'=p_2'=(1-\eta)p_1=\frac{p_1}{1+p_1-p_2}=\frac{p_1}{2p_1-\gamma}\geq \frac{1}{2-\gamma}\geq 1/2+\gamma/10$, we finish the proof.
\end{proof}

Next, we perform a worst-case to average-case reduction, to lift the lower bound of UDISJ to a lower bound for the HIP problem.

\begin{lemma}\label{thm:hip}
$\adv_{n}^{HIP}(\Pi)<\left(\frac{400C}{n}+\frac{160000\log n}{n}\right)$ for any $C$-bit public-coin randomized protocol $\Pi$.
\end{lemma}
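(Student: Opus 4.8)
The plan is a worst-case-to-average-case reduction from Unique Set-Disjointness to the Hidden Index Problem, fed into \Cref{lem:udisj}. Given an arbitrary $C$-bit public-coin protocol $\Pi$ for $\mathrm{HIP}_n$ with advantage $\gamma$, I would build a $C$-bit public-coin protocol for $\mathrm{UDISJ}$ with worst-case two-sided advantage $\Omega(\gamma)$; then \Cref{lem:udisj} forces $\gamma=O\big(\tfrac{C}{n}+\tfrac{\log n}{n}\big)$. As a normalization I would first apply \Cref{prop:hip} to assume $\Pi$ is balanced, i.e.\ writing $p_{=}=\Pr_{D_=}[\Pi\text{ answers ``$=$''}]$ and $p_{\neq}=\Pr_{D_{\neq}}[\Pi\text{ answers ``$\neq$''}]$, both are at least $\tfrac12+\tfrac{\gamma}{10}$.

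The embedding uses only public coins. Given a $\mathrm{UDISJ}$ instance $(S,T)\subseteq[3n]$ with $|S|=|T|=n$ and $|S\cap T|\in\{0,1\}$, the players publicly sample a uniform permutation $\pi$ of $[3n]$, i.i.d.\ uniform bits $(r_k)_{k\in[3n]}$, and one extra uniform bit $\beta\in\{0,1\}$. Alice forms $a$ with $\supp(a)=\pi(S)$ and $a_k=r_k$ on the support; Bob forms $b$ with $\supp(b)=\pi(T)$ and $b_k=r_k\oplus\beta$ on the support; they run $\Pi$ on $(a,b)$, and answer ``$|S\cap T|=1$'' exactly when $\Pi$'s output matches $\beta$ (namely ``$=$'' when $\beta=0$ and ``$\neq$'' when $\beta=1$), and ``$|S\cap T|=0$'' otherwise. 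No communication is incurred beyond the call to $\Pi$. The role of $\beta$ is that on intersecting instances it randomly presents $\Pi$ with a $D_=$ input or a $D_{\neq}$ input, whereas on disjoint instances it is information-free.

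For the analysis I would split on $|S\cap T|$. If $|S\cap T|=1$: since $\mathrm{Sym}([3n])$ acts transitively on pairs of $n$-subsets of $[3n]$ meeting in exactly one point, $(\pi(S),\pi(T))$ has exactly the $\mathrm{HIP}_n$ support distribution; moreover, letting $i'=\pi(i)$ be the shared index, conditioning on $\beta=0$ makes $(a,b)$ precisely a $D_=$ instance (at $i'$ we have $a_{i'}=r_{i'}=b_{i'}$, and all remaining support bits are fresh i.i.d.\ uniform because $\pi(S)\cap\pi(T)=\{i'\}$), while conditioning on $\beta=1$ makes it precisely a $D_{\neq}$ instance. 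Hence, for every such $(S,T)$, the $\mathrm{UDISJ}$ answer is correct with probability $\tfrac12 p_{=}+\tfrac12 p_{\neq}\ge\tfrac12+\tfrac{\gamma}{10}$. If $|S\cap T|=0$: $\supp(a)$ and $\supp(b)$ are disjoint, so the bits of $a$ and of $b$ depend on disjoint blocks of the $r_k$'s, and the conditional law of $(a,b)$ given $\beta$ is the same for $\beta=0$ and $\beta=1$; thus $\beta$ is independent of $\Pi$'s output, and the protocol is correct with probability exactly $\tfrac12$, again for every such $(S,T)$. Combining, our $\mathrm{UDISJ}$ protocol has worst-case two-sided advantage at least $\big(\tfrac12+\tfrac{\gamma}{10}\big)+\tfrac12-1=\tfrac{\gamma}{10}$, so \Cref{lem:udisj} gives $\tfrac{\gamma}{10}\le\tfrac{20C}{n}+\tfrac{400\log n}{n}$, i.e.\ $\gamma\le\tfrac{200C}{n}+\tfrac{4000\log n}{n}$, which is stronger than the claimed bound.

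I expect the only genuinely subtle step to be the disjoint case, where the embedding hands $\Pi$ an input with $\supp(a)\cap\supp(b)=\emptyset$, outside the $\mathrm{HIP}_n$ promise, so $\Pi$'s behavior there is uncontrolled. The resolution is that no control is needed: the free bit $\beta$ acts as a one-time pad on Bob's (disjoint) bits and therefore leaks nothing about itself, which pins the success probability to exactly $\tfrac12$. The remaining routine-but-necessary point is the exact distributional identity in the intersecting case — that a uniform relabeling sends a fixed ``intersection-one'' support pair to the $\mathrm{HIP}_n$ support distribution — so that $\Pi$'s average-case guarantee can be applied verbatim.
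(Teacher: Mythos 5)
Your proof is correct, and it shares the paper's backbone: a worst-case-to-average-case reduction from UDISJ to $\mathrm{HIP}_n$ via a public random permutation and public random bits, concluded by invoking \Cref{lem:udisj}. The difference is in how the off-promise instances (disjoint supports, where $\Pi$'s behavior is uncontrolled) are handled. The paper introduces the auxiliary distribution $D_{out}$, splits into two cases according to whether $\Pi$ answers ``$=$'' or ``$\ne$'' with probability at least $1/2$ on $D_{out}$, and builds one of two reductions accordingly (setting $b_i=1-x_i$ or $b_i=x_i$), relying on \Cref{prop:hip} to guarantee the relevant one-sided probability is at least $\tfrac12+\tfrac{\gamma}{10}$. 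Your extra pad bit $\beta$ replaces that case analysis: on intersecting instances it embeds a uniform mixture of $D_=$ and $D_\ne$, giving success $\tfrac12(p_=+p_\ne)$, while on disjoint instances $\beta$ is information-theoretically hidden (given the disjointness of $\pi(S)$ and $\pi(T)$, the law of $(a,b)$ is the same for $\beta=0$ and $\beta=1$), pinning the success probability at exactly $\tfrac12$ for every such instance. You correctly identified and resolved the one subtle point, the independence of $\beta$ from $\Pi$'s transcript in the disjoint case (with $\Pi$'s own public coins taken fresh). A small bonus of your route: \Cref{prop:hip} is not actually needed, since the mixture already gives success $\tfrac{1+\gamma}{2}$ on intersecting instances, so a single reduction yields advantage $\gamma/2$ and hence $\gamma\le\tfrac{40C}{n}+\tfrac{800\log n}{n}$, a slightly better constant than both your stated $\gamma/10$ bound and the paper's.
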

\begin{proof}
By contradiction, assume that for some $C$, there exists a $C$-bit public-coin randomized protocol $\Pi$  for $\mathrm{HIP}_{n}$ has advantage at least $\frac{400C}{n}+\frac{160000\log n}{n}:=\gamma$. By Proposition \ref{prop:hip}, we can assume 
\[
\Pr_{D_=}[\Pi(a,b)\mbox{ outputs} \text{``$=$''}]\geq \frac{1}{2}+\frac{\gamma}{10}, \mbox{ and } \Pr_{D_\neq}[\Pi(a,b)\mbox{ outputs} \text{``$\ne$''}]\geq \frac{1}{2}+\frac{\gamma}{10}.
\]

Let $D_{out}$ be a distribution on $(a,b)\in\{0,1,\star\}^{3n}$ with $|a|=|b|=n$ and $|a\cap b|=0$ defined as follows:
\begin{itemize}
\item $\mathrm{supp}(a)$ and $\mathrm{supp(b)}$ are two disjoint random subsets from $[3n]$ each of size $n$. Each $a_j$ and $b_k$ in the support is an independently uniformly random bit.
\end{itemize}
In the following, we first focus on the case $\Pr_{D_{out}}[\Pi(a,b) \mbox{ outputs }\text{``$=$''}]\geq 1/2$. The other case $\Pr_{D_{out}}[\Pi(a,b) \mbox{ outputs }\text{``$\ne$''}]\geq 1/2$ can be handled similarly, and will be specified later.

Now, we construct a $C$-bit public-coin randomized protocol $\Pi'$ for UDISJ. Suppose Alice holds $S\subseteq[3n]$ with $|S|=n$, and Bob holds $T\subseteq[3n]$ with $|T|=n$, then the protocol proceeds as follows:
\begin{itemize}
\item Public randomness: a random string $x\in\{0,1\}^{3n}$, and a random permutation $\sigma:[3n]\rightarrow[3n]$.
\item Alice generates a $a\in\{0,1,\star\}^{3n}$ with $\mathrm{supp}(a)=\sigma(S)$ and $a_i=x_i$ for $i\in\mathrm{supp}(a)$.
\item Bob generates a $b\in\{0,1,\star\}^{3n}$ with $\mathrm{supp}(b)=\sigma(T)$ and $b_i=1-x_i$ for $i\in\mathrm{supp}(b)$.
\item Alice and Bob run $\Pi$ on $(a,b)$, and obtain $\mathrm{ans}\in\{=,\neq\}$.
\item If $\mathrm{ans}$ is $=$, outputs $0$. Otherwise, outputs $1$.
\end{itemize}
We claim that the advantage of $\Pi'$ is 
\[
\adv=\min_{(S,T)\in UDIJS^{-1}(0)}\{\Pr[\Pi'(S,T)=0\}+\min_{(S,T)\in UDIJS^{-1}(1)}\{\Pr[\Pi'(S,T)=1\}-1> \frac{20C}{n}+\frac{400\log n}{n},
\]
and reaches a contradiction with Lemma \ref{lem:udisj}. This is because
\begin{itemize}
\item If $|S\cap T|=0$, then $(a,b)\sim D_{out}$, and $\Pr[\Pi'(S,T)=0]=\Pr_{D_{out}}[\Pi(a,b)= \text{``$=$''}]\geq 1/2$.
\item If $|S\cap T|=1$, then $(a,b)\sim D_{\neq}$, and $\Pr[\Pi'(S,T)=1]=\Pr_{D_\neq}[\Pi(a,b)= \text{``$\ne$''}]\geq \frac{1}{2}+\frac{\gamma}{10}=\frac{1}{2}+\frac{40C}{n}+\frac{16000\log n}{n}$.
\end{itemize}

For the other case, $\Pr_{D_{out}}[\Pi(a,b) \mbox{ outputs }\neq]\geq 1/2$, we can handle similarly by changing  $b_i=1-x_i$ to $b_i=x_i$, and changing the reduction step to be "If $\mathrm{ans}$ is $=$, outputs $1$. Otherwise, outputs $0$."
\end{proof}

\paragraph*{The Noisy HIP.} In order to facilitate the analysis of the Hidden Sign Problem, we introduce the following \emph{noisy} Hidden Index Problem, NHIP for short.


\begin{definition}[Noisy Hidden Index Problem, $\mathrm{NHIP}_{n}$]\label{def:nhip}
Alice and Bob each is given a randomized string $a,b\in\{1,0,\star\}^{3n}$. First $\supp(a)$ and $\supp(b)$ are sampled as follows:
\begin{itemize}
\item Draw $i\sim[3n]$, and two \emph{disjoint} random subsets $S_1,S_2$ each of size $n-1$ from $[3n]\setminus\{i\}$.
\item $\mathrm{supp}(a)=S_1\sqcup\{i\}$ and $\mathrm{supp}(a)=S_2\sqcup\{i\}$.
\end{itemize}
Note that $|a|=|b|=n$ and $|a\cap b|=1$. Their goal is to distinguish the following two cases: 
\begin{itemize}
\item Case ``$=$'':  Draw $Y^a_i=Y^b_i\sim\{\pm 1\}$. For any other $a_j$ or $b_k$ in the support of $a$ and $b$, draw $Y^a_j$ ($Y^b_k$) independently from $\{\pm 1\}$.
\item Case ``$\ne$'': Draw $Y^a_i=1 - {Y^b_i}\sim\{\pm 1\}$. For any other $a_j$ or $b_k$ in the supports of $a$ and $b$, draw $Y^a_j$ ($Y^b_k$) independently from $\{\pm 1\}$.
\end{itemize}
From the hidden $Y^a_j$, Alice observes an $a_j\sim \ber(\frac{1}{2} -\epsilon Y_j^a)$ for every $j\in \mathrm{supp}(a)$. Symmetrically, from the hidden $Y^b_j$, Bob observes a $b_j\sim \ber(\frac{1}{2} -\epsilon Y_j^b)$ for every $j\in \mathrm{supp}(b)$. We emphasize that Alice and Bob cannot observe $Y^a$ ($Y^b$) directly.
\end{definition}

For NHIP, we prove a lower bound that is, roughly speaking, stronger by a factor of $\frac{1}{\varepsilon^2}$ than that for HIP.
\begin{lemma}\label{lemma:nhip}
$\adv_{n,\epsilon}^{NHIP}(\Pi)\le 2\epsilon^2\left(\frac{400C}{n}+\frac{160000\log n}{n}\right)$ for any $C$-bit public-coin randomized protocol $\Pi$.
\end{lemma}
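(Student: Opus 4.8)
The plan is to reduce $\mathrm{NHIP}_n$ to (plain) $\mathrm{HIP}_n$ by showing that a protocol with large advantage on the noisy version can be turned into a protocol with advantage roughly $\frac{1}{\epsilon^2}$ times smaller on the noiseless version, and then invoke Lemma~\ref{thm:hip}. Concretely, suppose $\Pi$ is a $C$-bit public-coin protocol with $\adv^{NHIP}_{n,\epsilon}(\Pi)=\delta$. Given a noiseless $\mathrm{HIP}_n$ instance $(a,b)$ — where $a_j,b_k\in\{0,1\}$ are the \emph{true} bits and $a_i=b_i$ or $a_i=1-b_i$ depending on the case — Alice and Bob want to simulate inputs for $\mathrm{NHIP}_n$. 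The key observation is that in $\mathrm{NHIP}$ the hidden variable is $Y^a_j\in\{\pm1\}$ (uniform for $j\ne i$, and coupled at $i$), and the observed bit is a $\ber(\tfrac12-\epsilon Y^a_j)$ sample; so I want each player, from a single uniform bit $a_j$, to produce a bit whose conditional distribution given the hidden sign matches $\ber(\tfrac12-\epsilon Y^a_j)$. Equivalently, I should think of the noiseless bit $a_j$ as playing the role of the hidden sign $Y^a_j$ (reading $0/1$ as $\pm1$), and then apply a local randomized channel that outputs $1$ with probability $\tfrac12-\epsilon$ when the sign is $+1$ and $\tfrac12+\epsilon$ when it is $-1$. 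That channel is exactly a binary symmetric channel: flip the input bit, then independently re-randomize to uniform with probability $1-2\epsilon$. This is a purely local operation requiring no communication.

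The next step is to check that this simulation maps the two cases correctly. In $\mathrm{HIP}$'s case ``$=$'' we have $a_i=b_i$ uniform, so after identifying these as $Y^a_i=Y^b_i$ we land exactly in $\mathrm{NHIP}$'s case ``$=$''; likewise $\mathrm{HIP}$'s ``$\ne$'' maps to $\mathrm{NHIP}$'s ``$\ne$''. For every other coordinate, $a_j$ (resp.\ $b_k$) is an independent uniform bit, matching the independent uniform $Y^a_j$ (resp.\ $Y^b_k$) in $\mathrm{NHIP}$, and the supports are generated by the identical process in both definitions. Hence running $\Pi$ on the locally transformed inputs and echoing its output gives a protocol $\Pi'$ for $\mathrm{HIP}_n$ with $\adv^{HIP}_n(\Pi') = \adv^{NHIP}_{n,\epsilon}(\Pi) = \delta$, still using $C$ bits. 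By Lemma~\ref{thm:hip}, $\delta < \frac{400C}{n}+\frac{160000\log n}{n}$.

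But this only yields $\adv^{NHIP}_{n,\epsilon}(\Pi) < \frac{400C}{n}+\frac{160000\log n}{n}$, which is \emph{weaker} than the claimed $2\epsilon^2(\cdots)$ bound — so the straightforward reduction is in the wrong direction. The real content of Lemma~\ref{lemma:nhip} must be that noisy observations are \emph{much less informative}, so I instead need to argue in the other direction: a good $\mathrm{NHIP}$ protocol cannot exist because the noisy bit at coordinate $i$ carries only $O(\epsilon^2)$ bits about the correlation. The plan here is a hybrid/data-processing argument: consider the protocol's transcript distribution, and compare the ``$=$'' and ``$\ne$'' worlds; they differ only through the joint law of $(a_i,b_i)$. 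In both worlds $a_i$ and $b_i$ are marginally uniform; in ``$=$'' they satisfy $\Pr[a_i=b_i]=\tfrac12+2\epsilon^2$ (two $\ber(\tfrac12\mp\epsilon)$ draws conditioned on equal signs) and in ``$\ne$'' they satisfy $\Pr[a_i=b_i]=\tfrac12-2\epsilon^2$, so $d_{TV}$ of the pair $(a_i,b_i)$ between the two worlds is exactly $2\epsilon^2$ — exactly the correlation-gap fact quoted in the proof-overview. Thus I want to replace the noisy coordinate-$i$ bits by a ``$2\epsilon^2$-randomized'' version of a \emph{single coupled bit}: with probability $1-2\epsilon^2$ feed both players independent uniform bits at position $i$ (killing the signal), and with probability $2\epsilon^2$ feed them a perfectly correlated/anticorrelated bit pair as in noiseless $\mathrm{HIP}$. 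A protocol achieving advantage $\delta$ on $\mathrm{NHIP}$ then achieves advantage $\delta/(2\epsilon^2)$ on $\mathrm{HIP}_n$ (the advantage only accrues on the $2\epsilon^2$-fraction of inputs that carry signal, and on the rest both worlds are identical so the output is unbiased), giving $\delta/(2\epsilon^2) < \frac{400C}{n}+\frac{160000\log n}{n}$ by Lemma~\ref{thm:hip}, i.e.\ the claimed bound. The main obstacle — and the step to handle carefully — is justifying this coupling rigorously: one must exhibit an explicit joint distribution on (noiseless $\mathrm{HIP}$ instance, noisy $\mathrm{NHIP}$ instance) such that the $\mathrm{NHIP}$ instance is correctly distributed in each case, the $\mathrm{HIP}$ instance is correctly distributed in each case, and conditioned on the ``signal'' event the two instances agree on coordinate $i$ while conditioned on the ``no-signal'' event the $\mathrm{NHIP}$ output is independent of the $\mathrm{HIP}$ case label; the off-$i$ coordinates and the support-generation are trivially shared. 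Once that coupling is in place the advantage bookkeeping is a short computation.
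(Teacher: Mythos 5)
Your final argument is essentially the paper's own proof: the paper likewise writes each conditional NHIP input distribution at the intersection coordinate as a mixture $(1-2\epsilon^2)E^i_{\mathrm{common}} + 2\epsilon^2 E^i_{=}$ (resp.\ $E^i_{\ne}$) — your ``no-signal''/``signal'' split — so that any protocol's advantage on NHIP is exactly $2\epsilon^2$ times its advantage on HIP, and then invokes Lemma~\ref{thm:hip}. The coupling you flag as the remaining careful step is precisely this mixture decomposition, which follows from the joint-law computation of $(a_i,b_i)$ you already sketched (the off-$i$ coordinates and supports marginalize to the HIP distribution), so the proposal is correct and takes the same route as the paper, up to the same harmless constant-factor bookkeeping at coordinate $i$.
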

\begin{proof}
We perform a reduction. Let $\Pi$ be a $C$-bit public-coin protocol $\Pi$ for NHIP. 

Let $D_=$ be the joint (between Alice and Bob) input distribution of NHIP for the ``$=$'' case, and let $D_{\ne}$ be that for the ``$\ne$'' case. Also, let $E_=$ (resp.~$E_{\ne}$) be the input distribution of HIP (\Cref{def:hip}) for the ``$=$'' (resp.~``$\ne$'') case.

Note that $D_{=}$ can be written as a mixture of $D^i_{=}$'s where $D^i_{\ne}$ denotes the input distribution conditioned on the event that $a\cap b = \{i\}$. Similar decompositions exist for $D_{\ne}, E_=, E_{\ne}$.  Fix one such $i$. Let us study $D_{=}^i$ and $D_{\ne}^i$ closely. One can easily see that
\begin{align*}
D_{=}^i = \left( \left(\frac{1}{2}+\epsilon\right)^2 + \left(\frac{1}{2}-\epsilon\right)^2 \right) E_=^i + 2 \left(\frac{1}{2}-\epsilon\right) \left(\frac{1}{2} + \epsilon\right) E_{\ne}^i.
\end{align*}
To see this, simply note that the inputs are induced from the same value of $Y_i^a$ and $Y_i^b$. Through the Bernoulli sampling with bias $\frac{1}{2} \pm \epsilon$, it follows that the both inputs agree on the $i$-th coordinate with probability $(\frac{1}{2}-\epsilon)^2 + (\frac{1}{2}-\epsilon)^2$, and, if they agree, they agree on either $0$ or $1$ with equal probability. Similar reasoning applies for the other case of non-agreement on the $i$-th coordinate. 

Using a similar reasoning, we also obtain that
\begin{align*}
D_{\neq}^i = \left( \left(\frac{1}{2}+\epsilon\right)^2 + \left(\frac{1}{2}-\epsilon\right)^2 \right) E_{\ne}^i + 2 \left(\frac{1}{2}-\epsilon\right) \left(\frac{1}{2} + \epsilon\right) E_{=}^i.
\end{align*}

Let us now define $E^i_{\text{common}} = \frac{1}{2} E^i_{\ne} + \frac{1}{2} E^i_{=}$, and write
\begin{align*}
    D_{=}^i &= (1-2\epsilon^2) E^i_{\text{common}} + 2\epsilon^2 E^i_{=}, \\
    D_{\ne}^i &= (1-2\epsilon^2) E^i_{\text{common}} + 2\epsilon^2 E^i_{\ne}.
\end{align*}
Define $E_{\text{common}}$ to be the uniform mixture of $E^i_{\text{common}}$. We see that
\begin{align*}
    D_= &= (1-2\epsilon^2) E_{\text{common}} + 2\epsilon^2 E_{=}, \\
    D_\ne &= (1-2\epsilon^2) E_{\text{common}} + 2\epsilon^2 E_{\ne}.
\end{align*}

We now make use of \Cref{thm:hip} and it immediately follows that
\begin{align*}
\adv_n^{NHIP}(\Pi) = 2\epsilon^2 \adv_n^{HIP}(\Pi) = 2\epsilon^2 \left( \frac{400C}{n} + \frac{160000 \log n}{n}\right),
\end{align*}
as claimed.
\end{proof}

\paragraph*{Noisy HIP with general observations.} Finally, in order to analyze the game from \Cref{def:d-game}, we will have to work with a variant of Noisy HIP where we observe not bits but some other forms of signal emitted from two similar sources. 


\begin{definition}[Noisy Hidden Index Problem with General Observation, NHIPG]\label{def:nhip-general}
Let $\epsilon,\delta$ be parameters, and $\Gamma_0,\Gamma_1$ be a pair of $(\epsilon,\delta)$-indistinguishable distributions.

We consider a similar setup as in \Cref{def:nhip}: Alice and Bob are given uniformly random subsets $A$ and $B$ of $[3n]$, each of size $n$, subject to the condition that $|A\cap B| = 1$. Then, the implicit variables $Y^a\in \{\pm 1\}^{A}$ and $Y^b\in \{\pm 1\}^B$ are drawn according to the ``$=$'' or the ``$\ne$'' case. Then, Alice and Bob each observe the following:
\begin{itemize}
    \item From Alice's side, for each $j\in A$, Alice observes $a_j\sim \Gamma_{Y^a_j}$.
    \item From Bob's side, for each $k\in B$, Bob observes $b_k\sim \Gamma_{Y^b_k}$.
\end{itemize}
The goal for Alice and Bob is to distinguish between the case of ``$=$'' and that of ``$\ne$''.
\end{definition}

Generally, the instantiation of NHIPG will depend on the choice of $\Gamma_0,\Gamma_1$. However, we will still use the notation $\adv_{n,\epsilon,\delta}^{NHIPG}(\Pi)$ to denote the \emph{maximum} achievable advantage of the protocol $\Pi$ on solving \emph{any} instantiation of NHIPG with a pair of $(\epsilon,\delta)$-indistinguishable source distributions $\Pi_0,\Pi_1$. With this notational convention, we state and prove the following lemma.

\begin{lemma}\label{lem:nhip-general}
Assume $\epsilon < 0.5$. We have $\adv_{n,\epsilon,\delta}^{NHIPG}(\Pi) \le \epsilon^2\left(\frac{400C}{n} + \frac{160000\log n}{n}\right) + 2\delta n$ for any $C$-bit public-coin protocol $\Pi$.
\end{lemma}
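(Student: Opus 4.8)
The plan is to reduce \textbf{NHIPG} to the plain \textbf{NHIP} of \Cref{def:nhip} by "rounding" the $(\epsilon,\delta)$-indistinguishable observation distributions $\Gamma_0,\Gamma_1$ to a pair of genuine $\epsilon$-biased Bernoulli sources, paying a $\delta$-loss per coordinate. Concretely, by the definition of $(\epsilon,\delta)$-indistinguishability, write $\Gamma_b = (1-\delta)\Gamma'_b + \delta\Gamma^e_b$ where $\Gamma'_0,\Gamma'_1$ have bounded max-divergence $e^{\pm\epsilon}$ in both directions. First I would handle the "clean" part: since $\Gamma'_0,\Gamma'_1$ are $(\epsilon,0)$-indistinguishable, a standard argument lets both players, upon seeing a sample $z$ from $\Gamma'_{Y_j}$, simulate locally (using private randomness) an output bit distributed as $\ber(\tfrac12 - \epsilon' Y_j)$ for some $\epsilon' \le \epsilon/2$ (or $\le \epsilon$ up to constants) — this is the usual "post-processing a max-divergence-$\epsilon$ pair can be coupled to an $\epsilon$-biased coin" fact. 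After this simulation, an $\epsilon'$-NHIP instance remains, to which \Cref{lemma:nhip} applies, giving advantage at most $2(\epsilon')^2\left(\tfrac{400C}{n}+\tfrac{160000\log n}{n}\right) \le \epsilon^2\left(\tfrac{400C}{n}+\tfrac{160000\log n}{n}\right)$ after absorbing the constant from $\epsilon' \le \epsilon/\sqrt{2}$.

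Next I would account for the error part. Condition on the event $G$ that, for \emph{every} $j\in A\cup B$, the corresponding observation was drawn from the "clean" component $\Gamma'_{Y_j}$ rather than the error component $\Gamma^e_{Y_j}$. Since $|A\cup B| = 2n - 1 < 2n$ and each coordinate independently falls into the error component with probability $\delta$, a union bound gives $\Pr[\neg G] \le 2\delta n$. Conditioned on $G$, the joint input distribution is exactly that of an NHIP instance with source distributions $\Gamma'_0,\Gamma'_1$ — to which the clean-part argument above applies. A protocol's advantage splits as (advantage conditioned on $G$) weighted by $\Pr[G]\le 1$, plus at most $\Pr[\neg G]$ from the complementary event (where the adversary can do arbitrarily well). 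This yields the claimed bound
\[
\adv_{n,\epsilon,\delta}^{NHIPG}(\Pi) \le \epsilon^2\left(\frac{400C}{n} + \frac{160000\log n}{n}\right) + 2\delta n.
\]

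The one technical subtlety — and the step I expect to require the most care — is the local simulation turning a $\Gamma'_{Y_j}$-sample into an $\epsilon'$-biased Bernoulli sample without any communication and without knowing $Y_j$. The point is that the map must be a fixed (randomized) function applied identically whether $Y_j = +1$ or $Y_j = -1$; what changes is only the input distribution ($\Gamma'_{+1}$ vs $\Gamma'_{-1}$), and because these are within multiplicative $e^{\pm\epsilon}$ of each other pointwise, one can design a randomized rounding rule $R(z)\in\{0,1\}$ such that $\Pr[R(z)=1 \mid z\sim\Gamma'_{+1}]$ and $\Pr[R(z)=1\mid z\sim\Gamma'_{-1}]$ differ by at most $O(\epsilon)$ and are symmetric about $1/2$ — e.g. letting $R$ threshold the likelihood ratio with appropriate randomization, or more simply noting that $\Gamma'_0$ and $\Gamma'_1$ can be coupled so that they differ on an event of probability $O(\epsilon)$, and on that event output a fresh fair coin. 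This reduces NHIPG with sources $\Gamma'_0,\Gamma'_1$ to NHIP with bias parameter $\epsilon' = O(\epsilon)$, and then \Cref{lemma:nhip} closes the argument. I would be slightly careful to track the exact constant so that the final coefficient is $\epsilon^2$ and not a larger multiple; if needed one reduces to $\epsilon' \le \epsilon/\sqrt2$ so that $2(\epsilon')^2 \le \epsilon^2$, which the max-divergence bound comfortably permits.
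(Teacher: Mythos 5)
Your handling of the $\delta$ term (conditioning on the event that none of the at most $2n$ observed coordinates draws from the error components, via a union bound) is fine and matches the paper. The problem is in the clean part: your reduction runs in the wrong direction. You propose a local randomized rounding $R(z)$ that maps each NHIPG observation $z\sim\Gamma'_{Y_j}$ to a bit distributed (approximately) as $\ber(\tfrac12-\epsilon' Y_j)$, and then invoke \Cref{lemma:nhip}. But \Cref{lemma:nhip} bounds the advantage of protocols whose \emph{inputs} are the NHIP bits; the protocol $\Pi$ you must bound sees the raw samples $z$, which in general carry strictly more information than the rounded bit (the log-likelihood ratio of $\Gamma'_{+1}$ versus $\Gamma'_{-1}$ is a real number in $[-\epsilon,\epsilon]$, not a bit, and in the eventual application $\Gamma$ is a whole tuple of coin flips). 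Data processing gives ``advantage on rounded inputs $\le$ advantage on raw inputs,'' which is the opposite of what you need, so bounding protocols that first round does not bound $\adv^{NHIPG}_{n,\epsilon,0}(\Pi)$ for arbitrary $\Pi$. The coupling variant (``output a fresh fair coin on the event where the coupling differs'') has the same directional flaw, and moreover is not locally implementable: a player holding a single sample from an unknown one of the two distributions cannot decide whether it lies in the differing part of a coupling.

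The step that fixes this --- and is the paper's key move --- is to show the reverse simulation: an $(\epsilon,0)$-indistinguishable pair can be written as mixtures over a common pair, $\Gamma_0=\frac{e^\epsilon}{1+e^\epsilon}\Gamma^c_0+\frac{1}{1+e^\epsilon}\Gamma^c_1$ and $\Gamma_1=\frac{e^\epsilon}{1+e^\epsilon}\Gamma^c_1+\frac{1}{1+e^\epsilon}\Gamma^c_0$. Hence a single bit $a_j\sim\ber(\tfrac12-\epsilon'Y_j)$ with $\epsilon'=\tfrac12-\tfrac{1}{1+e^\epsilon}\le\epsilon$ can be post-processed (draw $a'_j\sim\Gamma^c_{a_j}$ using local randomness) into an \emph{exact} sample from $\Gamma_{Y_j}$. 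This means players holding an NHIP$_{\epsilon'}$ instance can locally manufacture an NHIPG instance and run the given protocol $\Pi$ on it, so $\adv^{NHIPG}_{n,\epsilon,0}(\Pi)=\adv^{NHIP}_{n,\epsilon'}(\Pi')\le 2(\epsilon')^2\bigl(\tfrac{400C}{n}+\tfrac{160000\log n}{n}\bigr)\le\epsilon^2\bigl(\tfrac{400C}{n}+\tfrac{160000\log n}{n}\bigr)$ by \Cref{lemma:nhip}. In other words, what your argument needs is that the biased bit is a sufficient generator for $\Gamma_{Y}$ (the channel $Y\mapsto\Gamma_Y$ factors through $Y\mapsto\ber(\tfrac12-\epsilon'Y)$), and that is exactly what the mixture decomposition provides; a lossy rounding of the observations cannot substitute for it.
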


\begin{proof}
    An easy fact is that $\adv_{n,\epsilon,\delta}^{NHIPG}(\Pi) \le \adv_{n,\epsilon,0}^{NHIPG}(\Pi) + 2\delta n$. This follows for a simple reason: suppose the observations are drawn from $\Gamma_0,\Gamma_1$. Then, whenever Alice (resp.~Bob) is to observe a sample from $\Gamma_b$, we can think of it as observing a sample from the mixture distribution $(1-\delta)\Gamma'_b + \delta \Gamma^e_b$, and it follows that the sample is drawn from $\Gamma^e_b$ with probability $\delta$. Thus, with probability $1-2\delta n$, none of the samples given to Alice/Bob is drawn from $\Gamma^e_b$. We can therefore analyze the advantage for the pair of $(\epsilon,0)$-indistinguishable source distributions $\Gamma'_0,\Gamma'_1$.

    In the following, we will just assume $\Gamma_0,\Gamma_1$ are $(\epsilon,0)$-indistinguishable and prove an upper bound on $\adv_{n,\epsilon,0}^{NHIPG}(\Pi)$. We will make use of a well-known fact: if $\Gamma_0,\Gamma_1$ are $(\epsilon,0)$-indistinguishable, then there exists a pair of distributions $\Gamma^c_0,\Gamma^c_1$ such that we can write $\Gamma_0,\Gamma_1$ as the following mixture distributions:
    \begin{align*}
        \Gamma_0 &= \frac{e^\epsilon}{1 + e^\epsilon} \Gamma^c_0 + \frac{1}{1 + e^\epsilon}\Gamma^c_1, \\
        \Gamma_1 &= \frac{e^\epsilon}{1 + e^\epsilon} \Gamma^c_1 + \frac{1}{1 + e^\epsilon}\Gamma^c_0.
    \end{align*}

    We now describe a reduction from NHIP to NHIPG. Suppose now we are tasked to solve NHIP with noise level $\epsilon' = \frac{1}{2} - \frac{1}{1 + e^\epsilon} \le \epsilon$. We design a protocol $\Pi'$ for this using $\Pi$ as a black box. Let $(a,b)$ be the (randomized) inputs to Alice and Bob in the NHIP problem. We know that, for each $j\in \mathrm{supp}(a)$, it holds that $a_j \sim \ber(\frac{1}{2}-\epsilon' Y^a_j)$. Let us post-process $a_j$ by drawing $a'_j \sim \Gamma^c_{a_j}$ using independent random coins available to us (by the public-randomness assumption). Composing this sampling step with $a_j\sim \ber(\frac{1}{2}-\epsilon' Y^a_j)$, we see that conditioning on $Y_a^j$, $a'_j$ is distributed as
    \begin{align*}
        a'_j = (\frac{1}{2} - \epsilon') \Gamma^c_{Y_j^a} + (\frac{1}{2} + \epsilon')\Gamma^c_{1 - Y_j^a} = \Gamma_{Y_j^a}.
    \end{align*}
    Similar reasoning holds for the Bob's side as well. Hence, it follows that if we independently post-process every $a_j$ and $b_k$ for $j\in \mathrm{supp}(a)$ and $k\in \mathrm{supp}(b)$, we obtain an instance $(a',b')$ for the NHIPG problem. We then run the protocol $\Pi$ between Alice and Bob on the instance.
    
    To analyze the advantage, note that if $(a,b)$ was drawn from the ``$=$'' (resp.~``$\ne$'') case of NHIP, $(a',b')$ is distributed as the ``$=$'' (resp.~``$\ne$'') case of NHIPG. Hence, it immediately follows that
    \begin{align*}
        \adv_{n,\epsilon,0}^{NHIPG}(\Pi) = \adv_{n,\epsilon'}^{NHIP}(\Pi') \le \epsilon^2\left( \frac{400 C}{n} + \frac{160000\log n}{n} \right),
    \end{align*}
    as desired.
\end{proof}

\subsection{From Unique Intersection to Statistical Subsamples}

In the last section, we were always analyzing the case where Alice and Bob each observe a subset $[3n]$ with the promise that their observations intersect at exactly one coordinate. There is an evident gap from this to the Hidden Sign Problem: namely in the Hidden Sign Problem, each party observes a random subset with expected size $\theta n$, with \emph{no} promise on the intersection size between the two observed subsets from two parties.

Generally speaking, all the variants of hidden index problems from the last section can be described by a pair of source distributions $\Gamma_0,\Gamma_1$, from which Alice and Bob draw observations. With this in mind, we consider the following generalization of Hidden Sign Problem as follows:
\begin{definition}[Generalized Hidden Sign, GHS]\label{def:d-game-general}
Let $\Gamma_0,\Gamma_1$ be a pair of distributions. Let $n\in \mathbb{N}, \theta \in [0,1]$ be parameters. Define the statistical HiddenSign problem: Alice and Bob first sample subsets $\mathrm{supp}(a)$ and $\mathrm{supp}(b)$ via Poisson subsampling: namely each $i\in[n]$ is independently included in $\mathrm{supp}(a)$ with probability $\theta$. Similarly, each $j$ is included in $\mathrm{supp}(b)$ with probability $\theta$.

Let $Y^a,Y^b\in \{\pm 1\}^n$ be two strings of length $n$, sampled dependent on the case as follows:
\begin{itemize}
    \item Case ``$=$'': $Y^a = Y^b\sim \{\pm 1\}^n$. 
    \item Case ``$\perp$'': $Y^a$ and $Y^b$ are independently drawn from $\{\pm 1\}^n$.
\end{itemize}
Then, Alice and Bob receive inputs sampled from the following distribution:
\begin{itemize}
    \item Alice receives, for each $i\in \mathrm{supp}(a)$, a sample $a_i\sim \Gamma_{Y^a_i}$.
    \item Bob receives, for each $i\in \mathrm{supp}(b)$, a sample $b_i\sim \Gamma_{Y^b_i}$.
\end{itemize}
The goal for Alice and Bob is to distinguish between the two cases.
\end{definition}

We are ready to state the meta-theorem, as described below.
\begin{theorem}\label{thm:general-hidden-sign}
    Suppose $\Gamma_0$ and $\Gamma_1$ are $(\epsilon,\delta)$-indistinguishable. Consider the General Hidden Sign problem with $\Gamma_0,\Gamma_1$ and parameters $n,\theta$. Then, for every $C\ge \log(n)$, it holds that
    \begin{align*}
        \mathrm{Adv^{GHS}_{\theta,\epsilon,\delta}}(\Pi)\le O(\epsilon^2 C \theta + 2n\delta).
    \end{align*}
\end{theorem}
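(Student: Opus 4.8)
The plan is to reduce the Generalized Hidden Sign problem (with unrestricted, Poisson-subsampled supports and no promise on the intersection size) back to the Noisy Hidden Index Problem with General Observations (NHIPG), for which Lemma \ref{lem:nhip-general} already gives the bound $\epsilon^2(\frac{400C}{n}+\frac{160000\log n}{n}) + 2\delta n$. The conceptual obstacle is the same one flagged in the introduction: in GHS the two observed sets $\mathrm{supp}(a), \mathrm{supp}(b)$ overlap in roughly $\theta^2 n$ coordinates rather than exactly one, so we cannot directly invoke the ``unique intersection'' lower bound. The natural fix is a hybrid argument over the shared coordinates: order the (random) set $I = \mathrm{supp}(a)\cap \mathrm{supp}(b)$ and interpolate from the ``$=$'' case to the ``$\perp$'' case by switching one shared coordinate at a time from ``correlated'' to ``independent.'' Each single swap is exactly an NHIPG instance embedded inside GHS, so the triangle inequality gives $\mathrm{Adv^{GHS}} \le \mathbb{E}[|I|] \cdot \max_j \mathrm{Adv}(\text{one-swap instance})$, and $\mathbb{E}[|I|] = \theta^2 n$.

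The steps, in order, would be: (1) Condition on the random supports $\mathrm{supp}(a),\mathrm{supp}(b)$ and on the shared set $I$; it suffices to bound the advantage for each fixed configuration and average. (2) Set up the hybrid distributions $H_0, H_1, \dots, H_{|I|}$ where $H_t$ draws the first $t$ coordinates of $I$ with independent $Y^a_i, Y^b_i$ and the remaining $|I|-t$ shared coordinates (and all non-shared coordinates, trivially) with $Y^a_i = Y^b_i$; then $H_0$ is the ``$=$'' case and $H_{|I|}$ is the ``$\perp$'' case (the non-shared coordinates carry no information linking the two sides, so forcing them equal or independent is irrelevant). (3) Bound $|\Pr[\Pi(H_{t-1})=1] - \Pr[\Pi(H_t)=1]|$: here the only coordinate whose correlation status changes is the $t$-th shared coordinate $i^\star$; all other coordinates can be generated by the players using private/public randomness as a fixed ``context,'' and what remains is precisely an NHIPG instance on a universe of the appropriate size with a single planted shared index $i^\star$, source distributions $\Gamma_0,\Gamma_1$, and the ``$=$''/``$\ne$'' cases — wait, one must be careful that the hybrid swap is between ``$Y^a_{i^\star}=Y^b_{i^\star}$'' and ``$Y^a_{i^\star}, Y^b_{i^\star}$ independent,'' whereas NHIPG is stated with ``$=$'' vs ``$\ne$'' ($Y^a_{i^\star} = 1 - Y^b_{i^\star}$). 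Since ``independent'' is the average of ``$=$'' and ``$\ne$'', we get $\mathrm{Adv}(\text{swap}) \le \frac12 \mathrm{Adv}^{NHIPG}$, which only helps. Strictly, to apply Lemma \ref{lem:nhip-general} with its $n$ we should take the NHIPG universe size to be $\Theta(n)$ (it is, since the supports have size $\Theta(\theta n) \le 3n$, and we can always pad to exactly the form required by Definition \ref{def:nhip-general}); after padding the per-swap advantage is at most $\epsilon^2(\frac{400C}{\Theta(n)} + \frac{160000\log n}{\Theta(n)}) + 2\delta\cdot\Theta(n) = O(\epsilon^2 C/n + \delta n)$. (4) Sum over the $|I|$ swaps and take expectation over the supports: $\mathrm{Adv^{GHS}} \le \mathbb{E}[|I|]\cdot O(\epsilon^2 C / n + \delta n) = \theta^2 n \cdot O(\epsilon^2 C/n + \delta n) = O(\epsilon^2 C \theta^2 + \theta^2 n^2 \delta)$; since $\theta \le 1$ this is at most $O(\epsilon^2 C\theta + n\delta)$ — actually to match the stated bound $O(\epsilon^2 C\theta + 2n\delta)$ one wants $\theta^2 n \cdot \delta n \le n\delta$, i.e. $\theta^2 n \le 1$, which need not hold; so more likely the intended accounting pushes the $\delta$ loss through the NHIPG reduction only once per instance rather than once per swap (the $2\delta n$ term in Lemma \ref{lem:nhip-general} is a global ``bad event'' over all observed samples, of which there are $O(\theta n)$ in GHS, so it contributes $O(\theta n \delta)$ total, not per-swap), giving $O(\epsilon^2 C \theta^2 + \theta n \delta) \le O(\epsilon^2 C\theta + 2n\delta)$.

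The main obstacle I expect is step (3): making the embedding of a single hybrid swap into a genuine NHIPG instance fully rigorous. One must verify that all the ``context'' coordinates — the non-shared parts of $\mathrm{supp}(a)$ and $\mathrm{supp}(b)$, and the already-decided shared coordinates — can be simulated by Alice and Bob without communication (they can: each party's context is a function of its own support and fresh private randomness, plus the shared index location which can be fixed by public randomness), so that a protocol distinguishing $H_{t-1}$ from $H_t$ yields, with no communication overhead, a protocol for NHIPG. A secondary subtlety is handling the ``$=$'' vs ``$\perp$'' versus ``$=$'' vs ``$\ne$'' mismatch cleanly, and correctly bookkeeping the $\delta$-slack so it is charged globally rather than per-hybrid. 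Once these are in place, the calculation in step (4) is routine.
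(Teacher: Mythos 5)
Your high-level plan (hybrid over the shared coordinates, one swap at a time, each swap bounded via the unique-intersection problem, with the $\delta$-slack charged once globally) coincides with the paper's argument for the regime $\theta<1/2$, and your handling of the ``$=$ vs.\ independent'' versus ``$=$ vs.\ $\ne$'' mismatch and of the $\delta$ bookkeeping is fine. But there is a genuine quantitative gap in step (3): the per-swap advantage is \emph{not} $O(\epsilon^2 C/n)$, and the ``padding'' you invoke does not repair this. The NHIPG lower bound (\Cref{lem:nhip-general}) has denominator equal to the size of each party's set in an instance where the unique intersection coordinate is hidden \emph{uniformly among all of that party's coordinates}; this hiding is exactly where the $1/n$ factor from Unique Set-Disjointness comes from. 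When you embed one hybrid swap inside GHS, after revealing (or fixing) the other shared coordinates, the special coordinate is hidden only among the coordinates each party holds \emph{outside} the intersection, i.e.\ among $q_a-q_\cap+1$ (resp.\ $q_b-q_\cap+1$) coordinates, roughly $\theta(1-\theta)n$, not $\Theta(n)$. Padding with dummy coordinates generated locally by the players does not produce the hard NHIPG input distribution (the planted index would not be uniform over the padded set), so the larger-denominator bound cannot be invoked. The correct per-swap bound is $O\bigl(\epsilon^2 C/(q_a-q_\cap+1)+\epsilon^2 C/(q_b-q_\cap+1)\bigr)$, and summing over the $\approx\theta^2 n$ swaps and taking expectations (this is where the paper needs its binomial lemma, \Cref{lem:helper}) yields $O\bigl(\epsilon^2 C\,\theta/(1-\theta)\bigr)$ — which proves the theorem only when $\theta$ is bounded away from $1$.

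This is not a removable technicality: for $\theta$ close to $1$ the hybrid-plus-NHIPG route collapses entirely (at $\theta=1$ there are no non-shared coordinates, the per-swap bound degenerates to $O(\epsilon^2 C)$, and summing over $n$ swaps gives $O(\epsilon^2 C n)$), yet the paper applies \Cref{thm:general-hidden-sign} precisely with $\theta=1$ in its main case $n\ge m/10$. The paper therefore uses a completely different argument for $\theta\ge 1/2$: it reduces (as in \Cref{lem:nhip-general}) to distinguishing independent uniform bit strings from $\epsilon^2$-correlated ones and invokes the strong data-processing inequality of \cite{HadarLPS19} (their Remark 3), $D_{KL}(P^1_{X\Pi}\|P^0_{X\Pi})\le \epsilon^4 I(\Pi;X,Y)$, combined with Pinsker and the ``radius'' property of mutual information, to get advantage $O(\epsilon^2\sqrt{C})$, which dominates the claimed bound in that regime. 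Your proposal is missing this second ingredient, and without it the stated theorem (uniform in $\theta\in[0,1]$) is not proved.
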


\begin{proof}
First of all, we can ``pay'' the price of $2n\delta$ and turn to analyze a pair of sources $(\Gamma_0,\Gamma_1)$ that is $(\epsilon,0)$-indistinguishable. 

\paragraph*{Proof for $\theta \in [0,1/2)$.} We first prove the bound for the case of $\theta < \frac{1}{2}$.

We use $q_{\cap}$, $q_a$, and $q_b$ to denote the sizes of $|\supp(a)\cap \supp(b)|$, $|\supp(a)|$, and $|\supp(b)|$ respectively. We use $\adv_{q_\cap,q_a,q_b}$ to denote the maximum advantage of $\Pi$ conditioned on $(q_\cap,q_a,q_b)$; that is,
\begin{align*}
\adv_{q_\cap,q_a,q_b}=\sup_{\Pi} \left|\Pr_{D_=}[\Pi(a,b)\mbox{ outputs } \text{``$=$''} \mid q_\cap,q_a,q_b]-\Pr_{D_\perp}[\Pi(a,b) \mbox{ outputs } \text{``$=$''} \mid q_\cap,q_a,q_b]\right|.
\end{align*}
We have
\[
\adv^{GHS}_{\theta,\epsilon,\delta} \le  \E\left[\adv_{q_\cap,q_a,q_b}\right].
\]
One way to see this is to note that we can grant Alice and Bob the knowledge of $q_{\cap}, q_a, q_b$, which only make the distinguishing task easier. Moreover, conditioned on $q_{\cap},q_a,q_b$, we know that $\supp(a)$ and $\supp(b)$ are two uniform subsets of size $q_a$ and $q_b$ with an intersection size $q_\cap$. 

Furthermore, let $\adv^{k}_{q_\cap,q_a,q_b}$ (for $0\leq k\leq q_\cap$) denote the maximum advantage of any protocol $\Pi$ on distinguishing the following two cases:
\begin{itemize}
\item $D_{=^k\perp^{q_\cap-k}}$: $Y^a_i=Y^b_i$ for $k$ random indices $i$ in $\supp(a)\cap \supp(b)$, and all other $Y_a^j$ and $Y^j_b$ are independently drawn.
\item  $D_{=^{k+1}\perp^{q_\cap-k-1}}$: similarly $Y^a_i=Y^b_i$ for $(k+1)$ random indices in $\supp(a)\cap \supp(b)$, and other bits of $Y_a$ and $Y_b$ are independently drawn.
\end{itemize} 
Formally, we define
\[
\adv^{k}_{q_\cap,q_a,q_b}:=\sup_{\Pi} \left|\Pr_{D_{=^{k+1}\perp^{q_\cap-k-1}}}[\Pi(a,b)\mbox{ outputs }=\mid q_\cap,q_a,q_b]-\Pr_{D_{=^k\perp^{q_\cap-k}}}[\Pi(a,b) \mbox{ outputs } =\mid q_\cap,q_a,q_b]\right|.
\]
The following claim will be central to us.
\begin{claim} It holds that
\begin{align*}
\adv_{q_\cap,q_a,q_b}\leq\sum_{k=0}^{q_\cap} \adv^k_{q_\cap,q_a,q_b}\leq q_\cap\cdot \adv_{1,q_a-q_\cap+1,q_b-q_\cap+1}\leq q_\cap\cdot \epsilon^2 C\cdot O\left(\frac{1}{q_a-q_{\cap}+1}+\frac{1}{q_b-q_\cap+1}\right).
\end{align*}
\end{claim}
\begin{proof}
The first inequality is by the triangle inequality. For the second inequality, we prove that $\adv^k_{n,q_\cap,q_a,q_b}\leq \adv_{n-q_\cap+1,1,q_a-q_\cap+1,q_b-q_\cap+1}$. Indeed, consider the natural coupling between $D_{=^k\perp^{q_\cap-k}}$ and $D_{=^{k+1}\perp^{q_\cap-k-1}}$. Namely, we couple the realizations of $D_{=^k\perp^{q_\cap-k}}$ and $D_{=^{k+1}\perp^{q_\cap-k-1}}$ in a way that, for both cases there are $k$ common indices $i\in \mathrm{supp}(a)\cap \supp(b)$ such that $Y_i^a = Y_i^b$, and there is exactly one additional $i'\in \supp(a)\cap\supp(b)$ such that $Y_{i'}^a=Y_{i'}^b$ for the case of $D_{=^{k+1}\perp^{q_\cap-k-1}}$.

Now, we can grant Alice and Bob the knowledge of the $k$ common indices for which $Y^a_i=Y^b_i$. This does not make the distinguishing game harder. Consequently,
\begin{align*}
    \adv^k_{q_\cap,q_a,q_b}\leq \adv_{1,q_a-q_\cap+1,q_b-q_\cap+1}.
\end{align*}
It remains to justify the last inequality.
Let $n^* = \min(q_a-q_\cap + 1, q_b-q_\cap+1)$. Assume without loss of generality that $q_a-q_\cap + 1$ is smaller. Our last observation is that we can grant Alice and Bob the knowledge of a \emph{random} subset of $\supp(b)\setminus \supp(a)$ of size $q_b - q_a$. This implies that
\begin{align*}
    &~~~~ \adv_{1,q_a-q_\cap+1,q_b-q_\cap+1} \\
    &\le \adv_{1, q_a-q_\cap+1,q_a-q_\cap+1} \\
    &\le \epsilon^2 C\cdot O\left(\max\left(\frac{1}{q_a-q_{\cap}+1},\frac{1}{q_b-q_\cap+1}\right)\right),
\end{align*}
The last line follows because we have reduced the distinguishing game to a version of the Noisy Hidden Index Problem with generalization observations (c.f.~\Cref{def:nhip-general}), and the inequality follows by \Cref{lem:nhip-general}.
\end{proof}
Given the claim, we can make use of \Cref{lem:helper} (described at the end of the subsection) and deduce that
\[
\adv\leq O(1)\cdot \epsilon^2 C\cdot \E\left[\frac{q_\cap}{q_a-q_\cap+1}+\frac{q_\cap}{q_b-q_\cap+1}\right]\leq O(1)\cdot \frac{\epsilon^2 C\cdot \theta}{1-\theta},
\]
as desired.
\paragraph*{Proof for $\theta \in (1/2,1]$.} We now establish the proof for the case of $\theta > \frac{1}{2}$. We will in fact prove a stronger statement, which implies \Cref{thm:general-hidden-sign} for all $\theta \ge \frac{1}{2}$. Namely, we prove that, for $\theta = 1$, it holds that $\adv^{GHS}_{1,\epsilon,0} \le O(\epsilon^2 \sqrt{C})$.

First, by a reduction argument similar to the proof of \Cref{lem:nhip-general}, we can without loss of generality consider the case that $\Gamma_r \equiv \ber(\frac{1}{2} - \epsilon r)$ for $r\in \{\pm 1\}$. In this case, with $\theta = 1$, Alice and Bob are tasked to distinguish between the following two cases:
\begin{itemize}
\item Case 1: the inputs for Alice and Bob are a pair of independently generated bit strings $a\sim \{0,1\}^n, b\sim \{0,1\}^n$.
\item Case 2: the inputs for Alice and Bob are a pair of $\epsilon^2$ correlated bit strings: namely $a$ and $b$ and marginally uniform and every pair of bits $(a_i,b_i)$ have correlation $\epsilon^2$.
\end{itemize}
We will make use of a powerful result from \cite{HadarLPS19}. To describe the result, we begin with necessary notation and setup. Let $(X,Y)$ be the inputs to Alice and Bob in the correlated case, and $(\overline{X},\overline{Y})$ be the inputs in the uniform case. Let $\Pi$ be a communication protocol between Alice and Bob.

Let $P^1_{XY\Pi}$ be the distribution of $(X,Y,\Pi)$ (with the understanding the $\Pi$ is induced from the inputs $(X,Y)$), and similarly $\overline{P}^0_{XY\Pi}$ the distribution of $(\overline{X},\overline{Y},\Pi)$. Let $P^1_{X\Pi}$ be the marginal distribution of $P^1_{XY\Pi}$ on the $(X,\Pi)$ part. Define $P^0_{X\Pi}$ similarly. 

Now, in our language, a remarkable result of \cite{HadarLPS19} says that (see their Remark 3):
\begin{align}
D_{KL}(P^1_{X\Pi} \| P^0_{X\Pi}) \le \epsilon^4 I(\Pi;X,Y). \label{equ:HLPS-sdpi}
\end{align}
Let $\xi$ be the advantage of $\Pi$ on distinguishing $(X,Y)$ from $(\overline{X},\overline{Y})$. We set up the following experiment: flip a coin $R\sim \{1,2\}$. Depending on $R$ being $1$ or $2$, draw inputs $(x,y)$ from either $(X,Y)$ or $(\overline{X},\overline{Y})$. Finally run the protocol $\Pi$ on $(x,y)$ and obtain a verdict of $R$, denoted by $R'$. By the assumed advantage of $\Pi$, we know that $\Pr[R = R'] \ge \frac{1+\xi}{2}$. As a consequence, by Pinsker's inequality it follows that $I(R;R') \ge \frac{\xi^2}{4}$. Since we can obtain $R'$ from the communication protocol $\Pi$, it follows that
\begin{align*}
    \frac{\xi^2}{4} 
    &\le I(R;\Pi) \le I(R; \Pi, X) \\
    &\le \frac{1}{2} D_{KL}(P^1_{X\Pi} \| P^0_{X\Pi}) + \frac{1}{2} D_{KL}(P^0_{X\Pi}\| P^0_{X\Pi}) &\text{(the ``radius'' property of mutual info)} \\
    &\le \frac{1}{2}\epsilon^4 I(\Pi;X,Y) \le \frac{\epsilon^4}{2} H(\Pi) \le \frac{\epsilon^4 C}{2}. & \text{By \eqref{equ:HLPS-sdpi}}
\end{align*}
Re-arranging the inequality gives the desired upper bound on $\xi$, namely $\xi \le O(\epsilon^2\sqrt{C})$. 

Here we briefly explain the ``radius property'' of mutual information: for a joint distribution $P_{UV}$ with marginals $P_U$ and $P_V$, and for an arbitrary distribution $Q_V$, we have
\begin{align*}
    I(U;V) &= D_{KL}(P_{UV}\| P_U\times P_V) \\
    &=\mathbb{E}_{u\sim P_U}D_{KL}(P_{V\mid U = u} \| P_V) \\
    &\le \mathbb{E}_{u\sim P_U}D_{KL}(P_{V|U=u} \|Q_V).
\end{align*}
In our derivation, we used the property on the term $I(R; \Pi,X)$, using a reference distribution $Q_{X\Pi} = P^0_{X\Pi}$.
\end{proof}

\begin{lemma}\label{fact1}[Lemma 3 in \citep{ramdas2019unified}]
Let $K\sim\bin(t,\theta)$, then $\E[\frac{1}{K+1}]\leq \frac{1}{\theta(t+1)}$.
\end{lemma}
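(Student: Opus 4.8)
The plan is to evaluate $\E[1/(K+1)]$ in closed form and then discard a nonnegative term. The slickest route I would take uses the elementary identity $\frac{1}{K+1}=\int_0^1 x^K\,dx$, valid for every nonnegative integer $K$. Taking expectations and exchanging expectation with the integral (legitimate since everything in sight is nonnegative), I get
\[
\E\left[\frac{1}{K+1}\right]=\int_0^1 \E\!\left[x^K\right]dx=\int_0^1 (1-\theta+\theta x)^t\,dx,
\]
where the inner expectation is just the probability generating function of $\bin(t,\theta)$ evaluated at $x$. The substitution $u=1-\theta+\theta x$ (so $du=\theta\,dx$ and $u$ ranges over $[1-\theta,1]$) then yields
\[
\E\left[\frac{1}{K+1}\right]=\frac{1}{\theta}\int_{1-\theta}^{1}u^t\,du=\frac{1-(1-\theta)^{t+1}}{\theta(t+1)}\le \frac{1}{\theta(t+1)},
\]
the last inequality because $(1-\theta)^{t+1}\ge 0$.

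As an alternative, purely combinatorial, derivation I would expand $\E[1/(K+1)]=\sum_{k=0}^{t}\frac{1}{k+1}\binom{t}{k}\theta^k(1-\theta)^{t-k}$, use the identity $\frac{1}{k+1}\binom{t}{k}=\frac{1}{t+1}\binom{t+1}{k+1}$ (verified by expanding both sides in factorials), pull out the factor $\frac{1}{\theta(t+1)}$, re-index with $j=k+1$, and recognize $\sum_{j=1}^{t+1}\binom{t+1}{j}\theta^j(1-\theta)^{t+1-j}=1-(1-\theta)^{t+1}$ from the binomial theorem; this reproduces exactly the same closed form and hence the same bound. I do not anticipate any real obstacle here: the statement is a standard fact (Lemma 3 of \citep{ramdas2019unified}), and the only minor points worth a remark are the degenerate endpoints $\theta=0$ (where the claimed bound is $+\infty$ and the inequality is trivial) and $\theta=1$ (where $K=t$ deterministically and $\E[1/(K+1)]=1/(t+1)$ matches the bound with equality).
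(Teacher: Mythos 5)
Your proof is correct. Note that the paper does not prove this lemma at all --- it simply imports it as Lemma 3 of \citep{ramdas2019unified} --- so there is no in-paper argument to compare against; what you have supplied is a valid self-contained derivation of the cited fact. Both of your routes (the identity $\frac{1}{K+1}=\int_0^1 x^K\,dx$ combined with the binomial probability generating function, and the combinatorial identity $\frac{1}{k+1}\binom{t}{k}=\frac{1}{t+1}\binom{t+1}{k+1}$) are standard and check out, and they actually yield the exact value $\E\bigl[\frac{1}{K+1}\bigr]=\frac{1-(1-\theta)^{t+1}}{\theta(t+1)}$, which is marginally stronger than the stated bound; the degenerate cases $\theta=0$ and $\theta=1$ are handled correctly. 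Nothing is missing for the way the lemma is used in the paper (inside the proof of Lemma~\ref{lem:helper}), since only the upper bound $\frac{1}{\theta(t+1)}$ is needed there.
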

\begin{lemma}\label{lem:helper}
Let $A\sim\bin([n],\theta)$, $B\sim\bin([n],\theta)$. Then we have
\[
\E\left[\frac{|A\cap B|}{|A\setminus B|+1}\right]\leq \frac{\theta}{1-\theta}.
\]
\end{lemma}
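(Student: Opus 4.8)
The plan is to prove the bound coordinate-by-coordinate using linearity of expectation. First I would write $|A\cap B| = \sum_{i=1}^n \mathbf{1}[i\in A\cap B]$, so that
\[
\E\left[\frac{|A\cap B|}{|A\setminus B|+1}\right] = \sum_{i=1}^n \E\left[\frac{\mathbf{1}[i\in A\cap B]}{|A\setminus B|+1}\right],
\]
and it then suffices to show that each summand is at most $\frac{\theta}{(1-\theta)n}$.

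To bound the $i$-th summand, the key observation is that on the event $\{i\in A\cap B\}$ the coordinate $i$ contributes nothing to $A\setminus B$, so on that event $|A\setminus B| = |A'\setminus B'|$, where $A' := A\setminus\{i\}$ and $B' := B\setminus\{i\}$ denote the restrictions to $[n]\setminus\{i\}$. Since whether $i\in A$ and whether $i\in B$ are independent of $(A',B')$, the indicator $\mathbf{1}[i\in A\cap B]$ is independent of $|A'\setminus B'|$, and hence
\[
\E\left[\frac{\mathbf{1}[i\in A\cap B]}{|A\setminus B|+1}\right]
= \Pr[i\in A\cap B]\cdot \E\left[\frac{1}{|A'\setminus B'|+1}\right]
= \theta^2\cdot \E\left[\frac{1}{|A'\setminus B'|+1}\right].
\]
Next I would note that each of the $n-1$ coordinates $j\ne i$ lies in $A'\setminus B'$ independently with probability $\theta(1-\theta)$, so $|A'\setminus B'|$ is distributed as $\bin(n-1,\theta(1-\theta))$. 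Applying \Cref{fact1} with $t = n-1$ and success probability $\theta(1-\theta)$ gives $\E\left[\frac{1}{|A'\setminus B'|+1}\right] \le \frac{1}{\theta(1-\theta)\,n}$. Plugging this in bounds the $i$-th summand by $\theta^2\cdot \frac{1}{\theta(1-\theta)n} = \frac{\theta}{(1-\theta)n}$, and summing over $i\in[n]$ completes the proof.

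I do not expect a genuine obstacle here; the only point requiring a little care is the conditioning argument in the second display, where one must check that restricting to $\{i\in A\cap B\}$ genuinely removes coordinate $i$ from $A\setminus B$ and decouples the indicator from the remaining coordinates — both of which are immediate from the coordinatewise-independent (product) structure of $A$ and $B$.
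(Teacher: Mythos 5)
Your proof is correct and follows essentially the same route as the paper's: both reduce to a single coordinate (you via linearity over $i$, the paper via symmetry on coordinate $1$), observe that conditioning on $i\in A\cap B$ leaves $|A\setminus B|$ distributed as $\bin(n-1,\theta(1-\theta))$ independent of the indicator, and then invoke Lemma~\ref{fact1} to get the bound $\theta^2\cdot\frac{1}{\theta(1-\theta)n}$ per coordinate. Your write-up is in fact slightly more explicit than the paper's about why the indicator decouples from $|A'\setminus B'|$, but there is no substantive difference.
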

\begin{proof}
Let $X_i:=[i\in A\cap B]$, $D_i:=[i\in A\setminus B]$. Let $X=\sum X_i$, $D=\sum_i D_i$. Then
\begin{align*}
\E\left[\frac{|A\cap B|}{|A\setminus B|+1}\right]&=\E\left[\frac{X}{D+1}\right]=n\cdot \E\left[\frac{X_1}{D+1}\right]=n \left(\Pr[X_1=1]\cdot\E\left[\frac{1}{Y+1}\mid X_1=1\right] \right)\\
&\leq n\cdot \theta^2\cdot \frac{1}{\theta(1-\theta)n}=\frac{\theta}{(1-\theta)}.
\end{align*}
Here, the last inequality is by the fact that $\Pr[X_1]=\theta^2$ and Lemma \ref{fact1}.
\end{proof}

\subsection{Concluding the Proof}

We are now ready to analyze the game from \Cref{def:d-game}. We state the following theorem.

\begin{theorem}
For every $\epsilon, m$, $n\leq O(m/\epsilon^2)$, and $C\geq \log m$, it holds that that any $C$-bit protocol for the Hidden Sign Problem achieves a distinguishing advantage of at most $O\left(\log^2(m)\cdot \frac{n}{m} \cdot C\cdot \epsilon^2 \right)$.
\end{theorem}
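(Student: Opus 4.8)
The plan is to reduce the Hidden Sign Problem (Definition~\ref{def:d-game}) to the Generalized Hidden Sign problem of Definition~\ref{def:d-game-general}, and then invoke the meta-theorem \Cref{thm:general-hidden-sign}. The first step is to handle the Poissonization mismatch: in Definition~\ref{def:d-game} Alice observes, for each $i\in[m]$, a batch of $\poi(n/m)$ i.i.d.\ samples from $\ber(1/2+Y^a_i\epsilon/2)$, whereas the GHS setup wants a single observation $a_i \sim \Gamma_{Y^a_i}$ that is present only with probability $\theta$. The reconciliation is exactly the Poissonization identity recalled in the preliminaries: a batch of $\poi(n/m)$ i.i.d.\ draws is distributed as ``with probability $1-\theta$, no batch, and with probability $\theta$, a nonempty batch'' where $\theta = 1-e^{-n/m}$, or alternatively we simply set $n_{\mathrm{GHS}} = m$, $\theta = 1$, and let $\Gamma_r$ be the distribution of a $\poi(n/m)$-length i.i.d.\ block from $\ber(1/2 - \epsilon r)$ (up to relabeling the sign convention). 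The cleanest route is the latter: take $\theta = 1$ and let $\Gamma_0, \Gamma_1$ be the two block-distributions $\ber(1/2-\epsilon)^{\otimes \poi(n/m)}$ and $\ber(1/2+\epsilon)^{\otimes \poi(n/m)}$.

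Next I would quantify the $(\epsilon',\delta)$-indistinguishability of this pair $(\Gamma_0,\Gamma_1)$. This is precisely what \Cref{lem:dp} buys us, combined with the fact that Poissonizing the number of samples only costs a small additive slackness in $\delta$. Concretely, condition on the Poisson count being at most $t_{\max} := O((n/m)\log(m))$; this fails with probability $m^{-\Omega(1)}$ by a Poisson tail bound, which we absorb into $\delta$. Conditioned on a count $t \le t_{\max}$, \Cref{lem:dp} gives that $\ber(1/2-\epsilon)^{\otimes t}$ and $\ber(1/2+\epsilon)^{\otimes t}$ are $(O(\epsilon\sqrt{t\log(1/\delta)}),\delta)$-indistinguishable; plugging $t \le t_{\max}$ and choosing $\delta = 1/\mathrm{poly}(m)$, this is $(\epsilon', \delta)$-indistinguishable with $\epsilon' = O(\epsilon \sqrt{(n/m)}\cdot \log(m))$ and $\delta = m^{-\Omega(1)}$. (One small point: \Cref{lem:dp} is stated for a fixed number of samples $t$, so I need the easy observation that a mixture over $t$ of $(\epsilon',\delta)$-indistinguishable pairs is again $(\epsilon',\delta)$-indistinguishable, taking the worst $t$ in the bound; plus the truncation argument above to control the unbounded Poisson tail.)

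Then I apply \Cref{thm:general-hidden-sign} with parameters $n_{\mathrm{GHS}} = m$, $\theta = 1$, the source pair $(\Gamma_0,\Gamma_1)$, indistinguishability parameters $(\epsilon',\delta)$ as above, and any $C \ge \log m$. The theorem yields distinguishing advantage at most $O((\epsilon')^2 \sqrt{C} + 2m\delta)$ in the $\theta=1$ branch. Substituting $\epsilon' = O(\epsilon\sqrt{n/m}\cdot\log m)$ gives $(\epsilon')^2\sqrt C = O(\epsilon^2 \cdot (n/m) \cdot \log^2(m) \cdot \sqrt{C})$, and $2m\delta = m^{-\Omega(1)}$ is negligible against the claimed bound since $C \ge \log m$. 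Note $\sqrt{C} \le C$, so this is bounded by $O(\log^2(m)\cdot (n/m)\cdot C\cdot \epsilon^2)$, matching the statement. The hypothesis $n \le O(m/\epsilon^2)$ is what keeps $\epsilon' = O(\epsilon\sqrt{n/m}\log m)$ below the constant threshold $1/2$ required to apply \Cref{lem:dp} and \Cref{thm:general-hidden-sign} meaningfully (i.e.\ so that the indistinguishability parameter is $o(1)$ and the reduction inside \Cref{lem:nhip-general}, which assumes $\epsilon < 0.5$, goes through).

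The main obstacle I anticipate is not any single deep step but rather assembling the reduction cleanly: making sure the Poissonized block distribution is correctly matched to the GHS observation model (the identity from the preliminaries does this, but one must be careful that GHS's per-coordinate Poisson subsampling of $\supp(a)$ with probability $\theta$ and the per-coordinate $\poi(n/m)$ batching are not conflated — choosing $\theta=1$ and folding all Poissonization into $\Gamma$ sidesteps this), and tracking the logarithmic factors so that the final bound has exactly $\log^2(m)$ and not a higher power. A secondary subtlety is that \Cref{lem:dp} as quoted applies to two copies of the \emph{same} distribution (likely a typo for $\ber(1/2-\gamma)$ vs $\ber(1/2+\gamma)$), so I would state and use the corrected two-sided version. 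Everything else is bookkeeping: the heavy lifting — the UDISJ-to-NHIP-to-NHIPG chain and the $\theta=1$ strong data-processing argument via \cite{HadarLPS19} — is already done in \Cref{thm:general-hidden-sign}.
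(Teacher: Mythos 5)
Your reduction matches the paper's architecture (fold the per-coordinate Poisson batch into the source pair $(\Gamma_0,\Gamma_1)$, establish $(\epsilon',\delta)$-indistinguishability via \Cref{lem:dp} plus a Poisson truncation, then invoke \Cref{thm:general-hidden-sign}), and in the regime $n\ge \Omega(m)$ your argument is essentially identical to the paper's Case 1 and is correct. However, there is a genuine gap in the regime $n\ll m$, which you try to handle uniformly with $\theta=1$. Two things break there. First, the truncation claim ``$\Pr[\poi(n/m) > O((n/m)\log m)]\le m^{-\Omega(1)}$'' is false when $\lambda:=n/m$ is small: e.g.\ for $\lambda = 1/\log m$ the proposed cutoff is $O(1)$ while $\Pr[\poi(\lambda)\ge k]\approx \lambda^k/k!$ is only polylogarithmically small, not $m^{-\Omega(1)}$; the correct truncation level is $O(\lambda+\log m)$, not $O(\lambda\log m)$. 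Second, and more fundamentally, the claimed indistinguishability $\epsilon'=O(\epsilon\sqrt{\lambda}\,\log m)$ for the \emph{unconditioned} Poisson-block pair cannot hold when $\lambda \ll 1/\log^2 m$ (while still $\lambda \gg \delta$): on the event that the batch has exactly one bit, which has probability $\approx \lambda \gg \delta$ and hence cannot be absorbed into the $\delta$-part of the mixture decomposition, the two conditional laws are $\ber(\tfrac12\pm\epsilon)$, whose likelihood ratio is $e^{\Theta(\epsilon)}$; so any valid $\epsilon'$ is $\Omega(\epsilon)$, contradicting $\epsilon'=o(\epsilon)$. With the best $\epsilon'$ you can actually certify for the $\theta=1$ folding (namely $\epsilon'=O(\epsilon\log m)$, from worst-case $t\le O(\log m)$), the $\theta=1$ branch of \Cref{thm:general-hidden-sign} only yields $O(\epsilon^2 C\log^2 m)$, which is missing the crucial factor $n/m$ and does not prove the theorem.

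The paper avoids this by splitting into two cases. For $n\le m/10$ it does \emph{not} take $\theta=1$: it defines $\Gamma_r$ as the batch \emph{conditioned on being nonempty} (so $\epsilon'=O(\epsilon\log(1/\delta))$ suffices) and pushes the presence/absence of a batch into the GHS subsampling parameter $\theta=\Pr[\poi(n/m)\ge 1]\le n/m$; then the $\theta<1/2$ branch of \Cref{thm:general-hidden-sign} gives $O(\epsilon'^2 C\theta+n\delta)=O(\log^2(m)\cdot\frac{n}{m}\cdot C\cdot\epsilon^2)$. In other words, in the sparse regime the factor $n/m$ must come from $\theta$, not from the indistinguishability parameter; your proposal needs this case distinction (or an equivalent mechanism) to be repaired. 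Your side remarks — the typo in \Cref{lem:dp} and the need to handle the mixture over the Poisson count $t$ — are correct and consistent with what the paper implicitly does.
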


\begin{proof} 

Depending on whether $n\ge \frac{m}{10}$ or not, we consider two cases.

\paragraph*{Case 1: $n\ge \frac{m}{10}$.} 
    We first argue for the case of $n\ge \Omega(m)$. In this case, define $\Gamma_r$ for $r\in \{\pm 1\}$ to be the following distribution:
    \begin{itemize}
        \item Draw $t\sim \poi(n/m)$, and sample $t$ independent bits $x_1,\dots, x_t\sim \ber(\frac{1}{2}-\epsilon r)$. Output $(t, x_1,\dots, x_t)$.
    \end{itemize}
    Our primary claim is that $\Gamma_0$ and $\Gamma_1$ are $(\epsilon',\delta)$-indistinguishable for $\delta = \frac{1}{m^{10}}$ and $\epsilon' = O(\epsilon\sqrt{\frac{n}{m}}\log(1/\delta))$. To see this, first note that $\Pr[\poi(n/m) \le \frac{n}{m}\log(1/\delta)]\ge 1- \frac{1}{10\delta}$. We first condition on this event. Then, letting $t = O(\frac{n}{m}\log(1/\delta))$ and by \Cref{lem:dp}, we have that $\bin(t, \frac{1}{2}+\epsilon)$ and $\bin(t,\frac{1}{2}-\epsilon)$ are $(O(\epsilon\sqrt{t\log(1/\delta)}),\delta/2)$-indistinguishable. Combining these two pieces of observation, we conclude that $\Gamma_0$ and $\Gamma_1$ are $(\epsilon',\delta)$-indistinguishable.

    With this in mind, we can use \Cref{thm:general-hidden-sign} with $\Gamma_0,\Gamma_1,\theta = 1$ to deduce that the distinguishing advantage is at most $O(C{\epsilon'}^2)\le O(\log^2(m)\cdot \frac{n}{m}\cdot C\cdot \epsilon^2)$, as claimed.
    
\paragraph*{Case 2: $n\le \frac{m}{10}$.} We turn to analyze the case of $n\le \frac{m}{10}$. We first note that $\Pr[\poi(n/m)>0] = 1- e^{-n/m} \le \frac{n}{m}$. With this in mind, define $\Gamma_r$ for $r\in \{\pm 1\}$ as follows:
\begin{itemize}
\item Draw $t\sim \poi(n/m)|_{\poi(n/m)\ge 1}$ and sample $t$ independent bits $x_1,\dots, x_t\sim \ber(\frac{1}{2} - \epsilon r)$. Output $(t,x_1,\dots, x_t)$. 
\end{itemize}
In this case, we claim that $\Gamma_0$ and $\Gamma_1$ are $(\epsilon',\delta)$-indistinguishable with $\delta = \frac{1}{m^{10}}$ and $\epsilon' = O(\epsilon\log(1/\delta))$. To see this, we first observe that $\Pr[\poi(n/m)\le 5\log(1/\delta) \mid \poi(n/m) \ge 1] \le \delta / 10$. Let us condition on this event that $t\in [1,5\log(1/\delta)]$. Then, similarly as the above reasoning, we use the fact that $\bin(t,\frac{1}{2}-\epsilon)$ and $\bin(t,\frac{1}{2} + \epsilon)$ are $(O(\epsilon\sqrt{t\log(1/\delta)}), \delta/2)$-indistinguishable to deduce that $\Gamma_0,\Gamma_1$ are $(\epsilon',\delta)$-indistinguishable.

Finally, we can make use of \Cref{thm:general-hidden-sign} to deduce that the distinguishing advantage in this case is upper-bounded by $O(C{\epsilon'}^2\theta) \le O(\log^2(m)\frac{n}{m}C\epsilon^2)$, as claimed.
\end{proof}

\section{Conclusion}\label{sec:conclusion}
We established a multi-pass streaming lower bound for uniformity testing over a domain of size $[2m]$: any $\ell$-pass algorithm that uses $n$ samples and $s$ bits of memory must satisfy $\ell sn=\tilde\Omega(m/\epsilon^2)$. This extends the unconditional one-pass bound of \cite{colt19} to the multi-pass setting. A remaining question is whether the $\epsilon$-dependence can be improved to match the upper bound $sn=\tilde{O}(m/\epsilon^4)$.

\bibliographystyle{alpha}
\bibliography{main.bib}

\newcommand{\etalchar}[1]{$^{#1}$}
\begin{thebibliography}{CMVW16}

\bibitem[AMOP08]{Andoni08}
Alexandr Andoni, Andrew McGregor, Krzysztof Onak, and Rina Panigrahy.
\newblock Better bounds for frequency moments in random-order streams.
\newblock {\em CoRR}, abs/0808.2222, 2008.

\bibitem[BBS22]{colt22}
Gavin Brown, Mark Bun, and Adam Smith.
\newblock Strong memory lower bounds for learning natural models.
\newblock In {\em Conference on Learning Theory}, pages 4989--5029. PMLR, 2022.

\bibitem[BGL{\etalchar{+}}24]{stoc24}
Mark Braverman, Sumegha Garg, Qian Li, Shuo Wang, David~P Woodruff, and Jiapeng
  Zhang.
\newblock A new information complexity measure for multi-pass streaming with
  applications.
\newblock In {\em Proceedings of the 56th Annual ACM Symposium on Theory of
  Computing}, pages 1781--1792, 2024.

\bibitem[BGW20]{braverman2020coin}
Mark Braverman, Sumegha Garg, and David~P Woodruff.
\newblock The coin problem with applications to data streams.
\newblock In {\em 2020 IEEE 61st Annual Symposium on Foundations of Computer
  Science}, pages 318--329. IEEE, 2020.

\bibitem[BGZ21]{coin21}
Mark Braverman, Sumegha Garg, and Or~Zamir.
\newblock Tight space complexity of the coin problem.
\newblock In {\em 62nd {IEEE} Annual Symposium on Foundations of Computer
  Science, {FOCS} 2021, Denver, CO, USA, February 7-10, 2022}, pages
  1068--1079. {IEEE}, 2021.

\bibitem[BM13]{stoc13}
Mark Braverman and Ankur Moitra.
\newblock An information complexity approach to extended formulations.
\newblock In {\em Proceedings of the forty-fifth annual ACM symposium on Theory
  of computing}, pages 161--170, 2013.

\bibitem[BOS22]{berg2022memory}
Tomer Berg, Or~Ordentlich, and Ofer Shayevitz.
\newblock On the memory complexity of uniformity testing.
\newblock In {\em Conference on Learning Theory}, pages 3506--3523. PMLR, 2022.

\bibitem[C{\etalchar{+}}22]{survey}
Cl{\'e}ment~L Canonne et~al.
\newblock Topics and techniques in distribution testing: A biased but
  representative sample.
\newblock {\em Foundations and Trends{\textregistered} in Communications and
  Information Theory}, 19(6):1032--1198, 2022.

\bibitem[CCM08]{ToC16}
Amit Chakrabarti, Graham Cormode, and Andrew McGregor.
\newblock Robust lower bounds for communication and stream computation.
\newblock In {\em Proceedings of the Fortieth Annual ACM Symposium on Theory of
  Computing}, STOC '08, page 641–650, New York, NY, USA, 2008. Association
  for Computing Machinery.

\bibitem[CJP08]{CJP08}
Amit Chakrabarti, T.~S. Jayram, and Mihai Pundefinedtra\c{s}cu.
\newblock Tight lower bounds for selection in randomly ordered streams.
\newblock In {\em Proceedings of the Nineteenth Annual ACM-SIAM Symposium on
  Discrete Algorithms}, SODA '08, page 720–729, USA, 2008. Society for
  Industrial and Applied Mathematics.

\bibitem[CMVW16]{crouch2016stochastic}
Michael Crouch, Andrew McGregor, Gregory Valiant, and David~P Woodruff.
\newblock Stochastic streams: Sample complexity vs. space complexity.
\newblock In {\em 24th Annual European Symposium on Algorithms (ESA 2016)},
  pages 32--1, 2016.

\bibitem[CY24]{canonne2024simpler}
Cl{\'e}ment~L Canonne and Joy~Qiping Yang.
\newblock Simpler distribution testing with little memory.
\newblock In {\em 2024 Symposium on Simplicity in Algorithms (SOSA)}, pages
  406--416. SIAM, 2024.

\bibitem[DDKS16]{dinur2016memory}
Itai Dinur, Orr Dunkelman, Nathan Keller, and Adi Shamir.
\newblock Memory-efficient algorithms for finding needles in haystacks.
\newblock In {\em Advances in Cryptology--CRYPTO 2016: 36th Annual
  International Cryptology Conference, Santa Barbara, CA, USA, August 14-18,
  2016, Proceedings, Part II}, pages 185--206. Springer, 2016.

\bibitem[DGKR19]{colt19}
Ilias Diakonikolas, Themis Gouleakis, Daniel~M Kane, and Sankeerth Rao.
\newblock Communication and memory efficient testing of discrete distributions.
\newblock In {\em COLT}, pages 1070--1106, 2019.

\bibitem[Din20]{dinur2020streaming}
Itai Dinur.
\newblock On the streaming indistinguishability of a random permutation and a
  random function.
\newblock In {\em Advances in Cryptology--EUROCRYPT 2020: 39th Annual
  International Conference on the Theory and Applications of Cryptographic
  Techniques, Zagreb, Croatia, May 10--14, 2020, Proceedings, Part II 30},
  pages 433--460. Springer, 2020.

\bibitem[DRV10]{DworkRV10}
Cynthia Dwork, Guy~N. Rothblum, and Salil~P. Vadhan.
\newblock Boosting and differential privacy.
\newblock In {\em {FOCS}}, pages 51--60. {IEEE} Computer Society, 2010.

\bibitem[FHM{\etalchar{+}}20]{SODA2020randomordermatching}
Alireza Farhadi, MohammadTaghi Hajiaghayi, Tung Mai, Anup Rao, and Ryan~A.
  Rossi.
\newblock Approximate maximum matching in random streams.
\newblock In {\em Proceedings of the Thirty-First Annual ACM-SIAM Symposium on
  Discrete Algorithms}, SODA '20, page 1773–1785, USA, 2020. Society for
  Industrial and Applied Mathematics.

\bibitem[GM07]{GM07}
Sudipto Guha and Andrew McGregor.
\newblock Space-efficient sampling.
\newblock In {\em Artificial Intelligence and Statistics}, pages 171--178.
  PMLR, 2007.

\bibitem[GM09]{GM09}
Sudipto Guha and Andrew McGregor.
\newblock Stream order and order statistics: Quantile estimation in
  random-order streams.
\newblock {\em SIAM Journal on Computing}, 38(5):2044--2059, 2009.

\bibitem[GW16]{random14}
Mika G{\"{o}}{\"{o}}s and Thomas Watson.
\newblock Communication complexity of set-disjointness for all probabilities.
\newblock {\em Theory Comput.}, 12(1):1--23, 2016.

\bibitem[HLPS19]{HadarLPS19}
Uri Hadar, Jingbo Liu, Yury Polyanskiy, and Ofer Shayevitz.
\newblock Communication complexity of estimating correlations.
\newblock In {\em {STOC}}, pages 792--803. {ACM}, 2019.

\bibitem[JT19]{DBLP:conf/eurocrypt/JaegerT19}
Joseph Jaeger and Stefano Tessaro.
\newblock Tight time-memory trade-offs for symmetric encryption.
\newblock In Yuval Ishai and Vincent Rijmen, editors, {\em Advances in
  Cryptology - {EUROCRYPT} 2019 - 38th Annual International Conference on the
  Theory and Applications of Cryptographic Techniques, Darmstadt, Germany, May
  19-23, 2019, Proceedings, Part {I}}, volume 11476 of {\em Lecture Notes in
  Computer Science}, pages 467--497. Springer, 2019.

\bibitem[LWZ25]{colt25}
Qian Li, Shuo Wang, and Jiapeng Zhang.
\newblock Multi-pass memory lower bounds for learning problems.
\newblock In Nika Haghtalab and Ankur Moitra, editors, {\em The Thirty Eighth
  Annual Conference on Learning Theory, 30-4 July 2025, Lyon, France}, volume
  291 of {\em Proceedings of Machine Learning Research}, pages 3671--3699.
  {PMLR}, 2025.

\bibitem[LZ23]{LZ23}
Shachar Lovett and Jiapeng Zhang.
\newblock Streaming lower bounds and asymmetric set-disjointness.
\newblock In {\em 2023 IEEE 64th Annual Symposium on Foundations of Computer
  Science (FOCS)}, pages 871--882. IEEE, 2023.

\bibitem[Mei20]{meir2020comparison}
Uri Meir.
\newblock Comparison graphs: A unified method for uniformity testing.
\newblock {\em arXiv preprint arXiv:2012.01882}, 2020.

\bibitem[Raz16]{DBLP:conf/focs/Raz16}
Ran Raz.
\newblock Fast learning requires good memory: {A} time-space lower bound for
  parity learning.
\newblock In Irit Dinur, editor, {\em {IEEE} 57th Annual Symposium on
  Foundations of Computer Science, {FOCS} 2016, 9-11 October 2016, Hyatt
  Regency, New Brunswick, New Jersey, {USA}}, pages 266--275. {IEEE} Computer
  Society, 2016.

\bibitem[RBWJ19]{ramdas2019unified}
Aaditya~K Ramdas, Rina~F Barber, Martin~J Wainwright, and Michael~I Jordan.
\newblock A unified treatment of multiple testing with prior knowledge using
  the p-filter.
\newblock {\em The Annals of Statistics}, 47(5):2790--2821, 2019.

\bibitem[SSV19]{Sharan19}
Vatsal Sharan, Aaron Sidford, and Gregory Valiant.
\newblock Memory-sample tradeoffs for linear regression with small error.
\newblock In {\em Proceedings of the 51st Annual ACM SIGACT Symposium on Theory
  of Computing}, STOC 2019, page 890–901, New York, NY, USA, 2019.
  Association for Computing Machinery.

\bibitem[TT18]{tcc-2018-28986}
Stefano Tessaro and Aishwarya Thiruvengadam.
\newblock Provable time-memory trade-offs: Symmetric cryptography against
  memory-bounded adversaries.
\newblock In {\em Theory of Cryptography}, volume 11239 of {\em Theory of
  Cryptography}, pages 3--32. Springer, 2018.

\end{thebibliography}


\end{document}